\documentclass[conference]{IEEEtran}
\usepackage[utf8]{inputenc}
\usepackage[english]{babel}
\usepackage[T1]{fontenc}
\usepackage[hidelinks]{hyperref}
\usepackage{amsmath,amsfonts,amsthm,amssymb}
\usepackage{stmaryrd}
\usepackage{graphicx}
\usepackage[all]{xy}
\usepackage{mathpartir}
\usepackage{macros}

\theoremstyle{definition}
\newtheorem{definition}{Definition}

\theoremstyle{plain}
\newtheorem{proposition}[definition]{Proposition}
\newtheorem{theorem}[definition]{Theorem}
\newtheorem{lemma}[definition]{Lemma}

\newtheorem{conjecture}[definition]{Conjecture}
\theoremstyle{remark}
\newtheorem{remark}[definition]{Remark}
\newtheorem{example}[definition]{Example}

\relpenalty=10000
\binoppenalty=10000

\begin{document}
\title{A Type-Theoretical Definition\\of Weak $\omega$-Categories}
\author{
\IEEEauthorblockN{Eric Finster\\Samuel Mimram}
\IEEEauthorblockA{École Polytechnique\\
91192 Palaiseau France}
}


\maketitle

\begin{abstract}
  We introduce a dependent type theory whose models are weak
$\omega$-categories, generalizing Brunerie's definition of
$\omega$-groupoids. Our type theory is based on the definition of
$\omega$-categories given by Maltsiniotis, himself inspired by
Grothendieck's approach to the definition of $\omega$-groupoids. In
this setup, $\omega$-categories are defined as presheaves preserving
globular colimits over a certain category, called a \emph{coherator}.
The coherator encodes all operations required to be present in an
$\omega$-category: both the compositions of \emph{pasting schemes} as
well as their coherences. Our main contribution is to provide a
canonical type-theoretical characterization of pasting schemes as
contexts which can be derived from inference rules. Finally, we
present an implementation of a corresponding proof system.
\end{abstract}

\section{Introduction}
\subsection{Weak $\omega$-categories}
In a strict $\omega$-category, the axioms are designed to ensure that
the composite of any collection of composable cells is uniquely
defined.  Whichever way we choose to compute this composite will always
give rise to the same result. For instance, if we consider the
situation where we have three sequentially composable cells, this
forces composition to be associative. In a weak $\omega$\nbd-category,
our goal is to achieve a similar uniqueness, but without resorting to
equality: two compositions of $n$-cells should be related by an
$(n+1)$-cell (which should be unique up to $(n+2)$-cells, etc.). We
now have coherence cells, which themselves should have coherence
cells, etc.

Achieving a reasonable definition of weak $\omega$-categories is not
an easy task. Many proposals now exist, each with its own geometric
flavor and collection of techniques.  Often these techniques pass
through sophisticated categorical machinery, and making practical use
of the definition can be challenging.  In this paper, we take up the
definition proposed by Grothendieck~\cite{grothendieck1983pursuing}
for $\omega$-groupoids (categories in which every cell is invertible),
and later simplified and extended to a definition of
$\omega$-categories by
Maltsiniotis~\cite{maltsiniotis2010grothendieck}.  This definition was
studied in detail in Ara's thesis~\cite{ara:these} (who showed that it
is equivalent to Batanin's definition using contractible
operads~\cite{batanin1998monoidal}). The first difficulty overcome by
this proposal is the definition of what it means for a collection of
cells to be ``composable'' via the introduction of what we will refer
to as \emph{pasting schemes} in what follows. From here, the
definition mainly consists in formally iteratively adding composites
for such pasting schemes while preserving previously defined
compositions (although there are, of course, some subtleties
here). Note that contrary to the usual, explicit definitions of
low-dimensional weak $n$\nbd-categories (\eg bicategories or
tricategories) which insist on having compositions generated by binary
and nullary (identity) compositions, this definition is ``unbiased''
in the sense that compositions of all reasonable shapes are taken as
primitive operations.

\subsection{A type-theoretical definition}
The goal of this article is to reformulate this definition in
type-theoretic terms, which is to say to present a type theory such
that the (set-theoretic) models of the theory should be precisely weak
$\omega$\nbd-categories. The idea of formulating $1$\nbd-categories in
type theory dates back to
Cartmell~\cite{cartmell1986generalised}. More recently, in his
thesis~\cite{brunerie:these}, Brunerie has introduced a
type-theoretical definition of weak $\omega$-groupoids, with the aim
of showing that types in homotopy type theory possess such a structure
(see also~\cite{lumsdaine2009weak,van2011types,altenkirch2012syntactical}
for other work in this direction). In this article, we generalize and extend his work in
order to give a definition for $\omega$-categories. The main
contribution here is to characterize pasting schemes in type theory,
a step which is not required for defining $\omega$\nbd-groupoids.


There are a number of reasons why one might seek such a
reformulation. First, it provides us with a syntax for
$\omega$\nbd-categories which can be quite convenient in practice: in
particular, one can give meta-theoretic proofs by induction on the
structure of terms. Second, it has didactic merits: our definition
consists in only a few inference rules, and should be comprehensible
to anyone with some experience in logic or type systems. That is, we
keep the categorical prerequisites to a minimum.  Third, it is compact
lending itself to concrete computations. Finally, it is mechanizable
meaning that one can give a typechecking algorithm for determining if
a given term is a valid coherence in an $\omega$-category. To our
knowledge there are only two such tools for checking proofs in higher
categories. The first one is Opetopic~\cite{opetopic}, based on
opetopic categories, coming with a very different definition and tools.
The second one is Globular~\cite{vicary2016globular}, based on the
theory of semi-strict categories. While this theory allows for much
shorter proofs and has a very nice graphical interface for
constructing them, a complete set of axioms which should be satisfied
in high-dimensions is not known yet (not even whether there is a
reasonable such set of axioms); on the other hand, our tool is
based on a firm theory, but requires significantly more small-step
manipulations in the proofs.

\subsection{Plan of the paper}
We begin by introducing a type-theoretical definition of globular sets
(\secr{globular-sets}), then characterize and study pasting schemes among them
(\secr{pasting-schemes}) and use those to define weak $\omega$-categories
(\secr{cat}). We finally briefly present an implementation
(\secr{implementation}) and conclude (\secr{concl}).






The authors would like to thank Dimitri Ara for his helpful
discussions on the topic of this paper. This work was supported by the
CATHRE ANR grant ANR-13-BS02-0005-02.


\section{A type theory for globular sets}
\label{sec:globular-sets}
Before proceeding to the complete definition of $\omega$\nbd-categories,
we introduce first in this section a type theory whose models are
precisely globular sets, see also~\cite{lumsdaine2009weak,van2011types}.
This simpler theory contains only the
context and type formation rules, but we present it here and study
it in detail in order to make our work easier when considering the
complete system in \secr{cat}.

\subsection{Globular sets}
The definition of $\omega$-categories which concerns us here is based on the
notion of globular set. A globular set may be seen as an higher-dimensional
generalization of a (directed) graph, consisting not only of edges, but
of edges between edges and so on.

\begin{definition}
  \label{def:gset}
  A \emph{globular set}~$G$ consists of a family $(G_n)_{n\in\N}$ of sets
  together with two families of maps $s_n,t_n:G_{n+1}\to G_n$ indexed by
  $n\in\N$ such that
  \begin{equation}
    \label{eq:gset}
    s_n\circ s_{n+1}=s_n\circ t_{n+1}
    \qquad\text{and}\qquad
    t_n\circ s_{n+1}=t_n\circ t_{n+1}    
  \end{equation}
  for every $n\in\N$. A \emph{morphism} $f:G\to G'$ between globular sets $G$
  and $G'$ consists of a family of functions $f_n:G_n\to G'_n$ such that
  $s_n\circ f_{n+1}=f_n\circ s_n$ and $t_n\circ f_{n+1}=f_n\circ t_n$ for every
  $n\in\N$. We write $\GSet$ for the resulting category.
\end{definition} 

\noindent
In a globular set~$G$, the elements of $G_n$ are called \emph{$n$-cells} (cells
whose \emph{dimension} is $n$) and the functions~$s_n$ and~$t_n$ respectively
associate to an $(n+1)$-cell its \emph{source} and \emph{target} $n$-cell. We
say that a cell is \emph{top-dimensional} when it is neither the source nor the
target of another cell.
We sometimes write $s^m_n=s_n\circ s_{n+1}\circ\ldots\circ s_{m-1}$ for the
\emph{iterated source} function.  The \emph{iterated target} function, $t^m_n$,
is defined similarly.

The set of all cells of a globular set~$G$ is denoted
$G_\infty=\coprod_{n\in\N}G_n$ and the \emph{cardinal} of~$G$ is that
of~$G_\infty$. We say that~$G$ is \emph{finite} when $G_\infty$ is finite, or
equivalently when~$G_n$ is finite for any $n\in\N$ and there is~$N\in\N$ such
that $G_n=\emptyset$ for every $n\geq N$. The full subcategory on finite
globular sets is denoted~$\FinGSet$.

\begin{example}
  \label{ex:gset}
  The diagram
  $
  \vxym{
    x\ar@/^1.5ex/[r]^f\ar@/_1.5ex/[r]_g\ar@{}[r]|{\phantom\alpha\Downarrow\alpha}&y&\ar[l]_hz
  }
  $
  depicts the globular set~$G$ with $G_0=\set{x,y,z}$, $G_1=\set{f,g,h}$,
  $G_2=\set{\alpha}$ and $G_n=\emptyset$ for $n\geq 3$, with $s_1(\alpha)=f$,
  $t_1(\alpha)=g$, $s_0(f)=x$, $t_0(f)=y$, etc.
\end{example}

Equivalently, the category~$\GSet$ of globular sets can be defined as the
category $\hat\Glob$ of presheaves over the category~$\Glob$ whose objects are
integers and morphisms are generated by $s_n,t_n:n\to n+1$, for $n\in\N$,
subject to relations which are dual of~\eqref{eq:gset}. As with any presheaf
category, we are provided with the Yoneda embedding $Y:\Glob\to\hat\Glob$. Given
an object $n\in\Glob$, we write $D_n=Yn$ and call it the \emph{$n$-disk}: its
set of $k$-cells is $\set{x_k^-,x_k^+}$ for $k<n$, $\set{x_k}$ for $k=n$ and
$\emptyset$ otherwise:
\[
\begin{array}{c@{\hspace{3ex}}c@{\hspace{3ex}}c@{\hspace{3ex}}c}
  \xymatrix{
    x_0
  }
  &
  \xymatrix{
    x_0^-\ar[r]^{x_1}&x_0^+
  }
  &
  \xymatrix{
    x_0^-\ar@/^/[r]^{x_1^+}\ar@/_/[r]_{x_1^-}\ar@{}[r]|{\phantom{x_2}\Downarrow x_2}&x_0^+
  }
  &
  \xymatrix{
    x_0^-\ar@/^/[r]^{x_1^+}\ar@/_/[r]_{x_1^-}\ar@{}[r]|{\Downarrow\TO\Downarrow}&x_0^+
  }
  \\
  D_0&D_1&D_2&D_3
\end{array}
\]
We also write $\sigma_n^m=Ys_n^m:D_n\to D_m$ (\resp $\tau_n^m=Yt_n^m$) for the
canonical inclusion of an $n$-disk as the source (\resp target) of an $m$-disk.

\noindent
Equivalently, globular sets can also be defined coinductively:

\begin{definition}
  \label{def:gset-coind}
  A \emph{globular set}~$G$ consists of a set~$G$ together with, for all
  elements $x,y\in G$, a globular set~$G_{x,y}$.
\end{definition}

\subsection{Syntactic constructions}
We suppose fixed an infinite countable set of \emph{variables} $x,y,\ldots$. A
\emph{term} in the theory will always be a variable in this section. (The
distinction between terms and variable will become meaningful starting from
\secr{cat}). A \emph{substitution}~$\sigma$ is a list
\[
\sigma
\qeq
\slist{t_1,\ldots,t_n}
\]
of terms $t_i$, the empty substitution being denoted $\sempty$. The
\emph{types} are defined inductively as being either
\[
\Obj
\qquad\text{or}\qquad
\Hom Atu
\]
where $A$ is a type and $t$ and $u$ are terms. A \emph{context} $\Gamma$ is a
list
\[
\Gamma
\qeq
x_1:A_1,\ldots,x_n:A_n
\]
of pairs $x_i:A_i$ consisting of a variable~$x_i$ and a type $A_i$, what we
sometimes write $\Gamma=(x_i:A_i)_{1\leq i\leq n}$, the empty context being
denoted $\cempty$.

\begin{definition}
  The \emph{dimension} $\dim(A)$ of a type $A$ is the natural number defined
  inductively by
  \[
  \dim(\Obj)=0
  \qquad\text{and}\qquad
  \dim(\Hom Atu)=\dim(A)+1
  \]
  Given a context $\Gamma=(x_i:A_i)_{1\leq i\leq n}$, its dimension is
  $\max\setof{\dim(A_i)}{1\leq i\leq n}$.
\end{definition}

The reader will observe that these definitions are standard for the
construction of a dependent type theory.  We would like, however, to
emphasize the geometric intuition that this syntax naturally captures,
specifically, that of a finite globular set: a variable~$x$ corresponds
to a cell and its type~$A$ indicate its dimension (namely, $\dim(A)$)
as well as its source and its target. For instance, a variable
$x:\Obj$ corresponds to a $0$-cell and a variable $x:\Hom Atu$
corresponds to a $(\dim(A)+1)$-cell whose source is~$t$ and target
is~$u$.

\begin{definition}
  The set of \emph{free variables} is defined
  \begin{itemize}
  \item on terms by
    \[
    \FV(x)=\set{x}
    \]
  \item on substitutions by
    \[
    \FV(\sempty)=\emptyset
    \qquad
    \FV(\sext\sigma t)=\FV(\sigma)\cup\FV(t)
    \]
  \item on types by
    \[
    \FV(\Obj)=\emptyset
    \qquad
    \FV(\Hom Atu)=\FV(A)\cup\FV(t)\cup\FV(u)
    \]
  \item on contexts by
    \[
    \FV(\cempty)=\emptyset
    \qquad
    \FV(\Gamma,x:A)=\FV(\Gamma)\cup\set{x}\cup\FV(A)
    \]
  \end{itemize}
\end{definition}


\subsection{Typing rules}
As usual in dependent type theories, we consider four different kinds of
judgments whose informal interpretation is the following:
\begin{itemize}
\item $\Gamma\vdash$ means $\Gamma$ is a context,
\item $\Gamma\vdash A$ means $A$ is a type in context $\Gamma$,
\item $\Gamma\vdash t:A$ means $t$ has type $A$ in context $\Gamma$,
\item $\Gamma\vdash\sigma:\Delta$ means that $\sigma$ is a substitution of type
  $\Delta$ in context $\Gamma$.
\end{itemize}
A judgment holds when it is derivable using the following inference rules, which
we call the \emph{globular type theory}.

\subsubsection{Rules for types}
\[
\inferrule{\Gamma\vdash}{\Gamma\vdash\Obj}
\qquad\qquad
\inferrule{\Gamma\vdash t:A\\\Gamma\vdash u:A}{\Gamma\vdash\Hom Atu}
\]

\subsubsection{Rules for terms}
\[
\inferrule{\Gamma,x:A\vdash}{\Gamma,x:A\vdash x:A}
\qquad\qquad
\inferrule{\Gamma\vdash t:B}{\Gamma,x:A\vdash t:B}
\]
where we suppose $x\not\in\FV(t)\cup\FV(B)$ in the second rule

\subsubsection{Rules for contexts}
\[
\inferrule{\null}{\cempty\vdash}
\qquad\qquad
\inferrule{\Gamma\vdash A}{\Gamma,x:A\vdash}
\]
where we suppose $x\not\in\FV(\Gamma)$ in the second rule

\subsubsection{Rules for substitutions}
\[
\inferrule{\null}{\Gamma\vdash\sempty:\cempty}
\qquad\qquad
\inferrule{\Delta\vdash\sigma:\Gamma\\\Gamma\vdash A\\\Delta\vdash t:\csubst A\sigma\Gamma}{\Delta\vdash\sext\sigma t:(\Gamma,x:A)}
\]
The notation for the application of substitutions $\csubst A\sigma\Gamma$
is explained in next section.




\begin{lemma}
  The following can be shown.
  \begin{itemize}
  \item If $\Gamma\vdash t:A$ holds then $\Gamma\vdash A$ holds.
  \item If $\Delta\vdash\sigma:\Gamma$ holds then $\Gamma\vdash$ holds.
  \item If $\Gamma\vdash A$ holds then $\FV(A)\subseteq\FV(\Gamma)$.
  \item If $\Gamma\vdash$ holds then $\FV(\Gamma)=\set{x_1,\ldots,x_n}$, with
    $\Gamma=(x_i:A_i)_{1\leq i\leq n}$.
  \end{itemize}
\end{lemma}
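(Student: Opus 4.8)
The plan is to prove all four statements simultaneously by a single mutual induction on the structure of derivations, since the judgment forms are mutually recursive (types mention terms, contexts mention types, substitutions mention all three). I would set up the induction over the inference rules, treating each of the four claims as an invariant to be maintained, and for each rule verify that the conclusion's invariant follows from the premises' invariants (which are available by the induction hypothesis).

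I would proceed rule by rule. For the type-formation rules: the $\Obj$ rule is immediate, and for the $\Hom Atu$ rule, the third claim ($\FV(A)\subseteq\FV(\Gamma)$) follows because the premises $\Gamma\vdash t:A$ and $\Gamma\vdash u:A$ give, via the first claim applied to the induction hypothesis, that $\Gamma\vdash A$ holds; then $\FV(A)\subseteq\FV(\Gamma)$ by the induction hypothesis on that sub-derivation, and $\FV(t),\FV(u)\subseteq\FV(\Gamma)$ since a well-typed term in $\Gamma$ must have its (single) variable declared in $\Gamma$. For the term rules, the variable rule $\Gamma,x:A\vdash x:A$ requires showing $\Gamma,x:A\vdash A$; here I would use that the premise $\Gamma,x:A\vdash$ is derived from $\Gamma\vdash A$ via the context-extension rule, so $\Gamma\vdash A$ holds and weakening gives $\Gamma,x:A\vdash A$. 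The weakening rule for terms is handled by applying weakening to the type judgment supplied by the induction hypothesis. For the context rules, the empty-context case is trivial and the extension case records that the new variable is added to the free-variable set, giving the fourth claim. For the substitution rules, the first claim about $\Gamma\vdash$ holding reads off directly from the premises.

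The main obstacle I anticipate is the bookkeeping around \emph{weakening}: several rules (notably the variable and term-weakening rules) force me to turn a judgment $\Gamma\vdash A$ into $\Gamma,x:A'\vdash A$, so I will need an auxiliary weakening lemma stating that all four judgment forms are preserved under extending the context by a fresh variable (subject to the usual freshness side-condition $x\notin\FV(\Gamma)$). Strictly this weakening lemma is itself proved by the same style of mutual induction, so I would either prove it first as a standalone preliminary or fold it into the combined induction as a fifth invariant. The freshness side-conditions attached to the rules are exactly what guarantee the free-variable bookkeeping in the third and fourth claims stays consistent, so I would track those conditions carefully at each extension step rather than leaving them implicit.

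A subtle point worth flagging is that the four statements are genuinely interdependent — the first claim is invoked when proving the third (as noted above for $\Hom Atu$), and the fourth feeds the reasoning that free variables of terms appear in the context — so attempting to prove them in isolation would fail. Presenting them as a simultaneous induction, with the weakening lemma available, is what makes each inductive step go through cleanly.
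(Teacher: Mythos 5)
The paper gives no proof of this lemma at all (it is stated with ``The following can be shown''), so there is nothing to compare against except the argument the authors presumably had in mind; your plan --- a simultaneous structural induction on derivations over all judgment forms, with a weakening lemma either proved first or folded in as an extra invariant --- is exactly that standard argument, and your identification of weakening as the main bookkeeping burden is on target.

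There is, however, one step that does not go through as written. In the $\Hom Atu$ case of the third claim you obtain $\Gamma\vdash A$ by applying the first claim to the premise $\Gamma\vdash t:A$, and then invoke ``the induction hypothesis on that sub-derivation'' to get $\FV(A)\subseteq\FV(\Gamma)$. But the derivation of $\Gamma\vdash A$ produced by the first claim is \emph{not} a sub-derivation of the derivation of $\Gamma\vdash\Hom Atu$ --- it is a freshly constructed derivation (involving weakenings) whose size is not controlled --- so the induction hypothesis for the third claim cannot be applied to it. The fix is cheap and in the spirit of what you already do for weakening: add ``if $\Gamma\vdash t:A$ then $\FV(t)\cup\FV(A)\subseteq\FV(\Gamma)$'' as a further invariant of the mutual induction, which discharges this case directly from the premises. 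A second caveat: in the term-weakening rule, the conclusion $\Gamma,x:A\vdash t:B$ carries no hypothesis that $A$ is well formed, so in the sub-case $B=\Obj$ you cannot derive $\Gamma,x:A\vdash\Obj$ (which needs $\Gamma,x:A\vdash$, hence $\Gamma\vdash A$). This is really a defect of the rules/statement rather than of your proof --- the lemma implicitly presupposes well-formed contexts --- but your write-up should either add that presupposition or restrict the first claim accordingly, since otherwise the induction genuinely gets stuck at that rule.
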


\noindent
Finally, the following lemma allows us to identify derivable judgments
and their derivations.

\begin{lemma}
  A judgment can be derived in at most one way.
\end{lemma}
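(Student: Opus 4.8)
"A judgment can be derived in at most one way."

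This is about uniqueness of derivations - for each of the four judgment forms, there's at most one derivation tree.

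Let me think about how to prove this.The plan is to show that the globular type theory is \emph{syntax-directed}, meaning that for each of the four kinds of judgment the syntactic shape of the judgment already determines which inference rule can be its last step, and that this rule in turn determines all of its premises as explicit subexpressions of the conclusion. Once this is established, uniqueness follows by a routine mutual induction over the four forms of judgment: given two derivations of the same judgment, syntax-directedness makes their final rules agree and their immediate subderivations conclude exactly the same premises, so the induction hypothesis identifies those subderivations, and hence the two derivations coincide.

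The substance therefore lies in the case analysis establishing syntax-directedness, which I would carry out judgment by judgment. For contexts, the last rule is fixed by whether $\Gamma$ is $\cempty$ (the axiom) or has the form $\Gamma',x:A$ (the extension rule, whose premise $\Gamma'\vdash A$ is then read off). For types it is fixed by the head of $A$: the form $\Obj$ forces the first rule with premise $\Gamma\vdash$, whereas a form $\Hom Btu$ forces the second, with premises $\Gamma\vdash t:B$ and $\Gamma\vdash u:B$ determined by $A$. For substitutions it is fixed by whether the substitution is $\sempty$ (the axiom, which moreover forces the codomain to be $\cempty$) or an extension $\sext\sigma t$ (the second rule, whose three premises $\Delta\vdash\sigma:\Gamma$, $\Gamma\vdash A$ and $\Delta\vdash t:\csubst A\sigma\Gamma$ are recovered by splitting off the last entry of the codomain $(\Gamma,x:A)$). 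In all three of these cases the two candidate rules have conclusions of visibly different outer shapes, so at most one can apply.

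I expect the term judgment to be the only delicate point, since here the variable rule and the weakening rule both conclude something of the shape $\Gamma,x:A\vdash t:B$, so they are not separated by outer form alone. The decisive ingredient is the side condition on weakening: the variable rule requires the term to be exactly the last variable declared in the context, while weakening carries the hypothesis $x\notin\FV(t)$; since every term here is a variable, this forbids $t$ from being that last variable. Thus for a judgment $\Gamma',z:C\vdash y:A$ the alternatives $y=z$ and $y\neq z$ are mutually exclusive and exhaustive: the first admits only the variable rule (which additionally forces $A=C$ and leaves the premise $\Gamma',z:C\vdash$), the second only weakening (with premise $\Gamma'\vdash y:A$). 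Verifying this dichotomy is exactly what makes the fragment syntax-directed, and with it in hand the induction closes without further difficulty.
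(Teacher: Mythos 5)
Your proof is correct and is essentially the argument the paper leaves implicit (the lemma is stated without proof in the source): syntax-directedness plus mutual induction. You correctly isolate the only delicate point, namely that the variable rule and weakening both conclude judgments of shape $\Gamma,x:A\vdash t:B$, and resolve it via the side condition $x\notin\FV(t)$, which in this variables-only fragment makes the two rules mutually exclusive.
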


\subsection{Substitutions}
\label{sec:substitution}
Consider a context $\Gamma=x_1{:}A_1,\ldots,x_n{:}A_n$ and a substitution
$\sigma=\slist{t_1,\ldots,t_m}$ such that $\Delta\vdash\sigma:\Gamma$ holds. In
this case, we necessarily have $m=n$. Given a type~$A$ we write
$\csubst A\sigma\Gamma$ for the type obtained from $A$ by replacing each
variable $x_i$ by the term~$t_i$; given a term $t$, the term
$\csubst t\sigma\Gamma$ is defined similarly. More formally, we have
%
\[
\csubst\Obj\sigma\Gamma=\Obj
\qquad\qquad
\csubst{(\Hom Atu)}\sigma\Gamma=\Hom{\csubst A\sigma\Gamma}{\csubst t\sigma\Gamma}{\csubst u\sigma\Gamma}
\]
on types, and
\[
\csubst{x_i}\sigma\Gamma=t_i
\]
on terms. Application of substitutions is compatible with typing:

\begin{lemma}
  \label{lem:subst-typ}
  The following rule is admissible:
  \[
  \inferrule{\Delta\vdash\sigma:\Gamma\\\Gamma\vdash t:A}{\Delta\vdash\csubst t\sigma\Gamma:\csubst A\sigma\Gamma}
  \]
\end{lemma}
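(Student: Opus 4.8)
The plan is to prove a slightly stronger pair of statements by a single induction, namely that substitution is admissible on type-formation judgments as well as on typing judgments: (i) if $\Gamma\vdash A$ and $\Delta\vdash\sigma:\Gamma$ then $\Delta\vdash\csubst A\sigma\Gamma$, and (ii) if $\Gamma\vdash t:A$ and $\Delta\vdash\sigma:\Gamma$ then $\Delta\vdash\csubst t\sigma\Gamma:\csubst A\sigma\Gamma$. These must be treated together because the constructor $\Hom Atu$ interleaves types and terms, so the recursion on a type descends into typing derivations and conversely. I would therefore run a mutual structural induction on the derivation of the second premise (the type- or term-formation judgment), letting $\sigma$ be recovered by inversion on the substitution judgment where needed. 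Throughout I would use the preliminary fact that if $x\notin\FV(B)$ then replacing $x$ does not affect $\csubst B{-}{-}$, together with the free-variables lemma giving $\FV(A)\subseteq\FV(\Gamma)$.

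For the term cases, recall that in this section every term is a variable, so the last rule is either the variable rule or the weakening rule, and in both cases $\Gamma$ is non-empty. Since $\Delta\vdash\sigma:\Gamma$ with $\Gamma=(\Gamma',x:A)$ can, by the uniqueness-of-derivations lemma, only have been concluded by the substitution-extension rule, inversion yields a decomposition $\sigma=\sext{\sigma'}{s}$ with $\Delta\vdash\sigma':\Gamma'$ and $\Delta\vdash s:\csubst A{\sigma'}{\Gamma'}$. In the variable case, where $t=x$ and $A$ is the type just introduced, we have $\csubst x\sigma\Gamma=s$; since $\FV(A)\subseteq\FV(\Gamma')$ we have $x\notin\FV(A)$, so $\csubst A\sigma\Gamma=\csubst A{\sigma'}{\Gamma'}$ and the premise $\Delta\vdash s:\csubst A{\sigma'}{\Gamma'}$ is exactly what is required. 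In the weakening case, $t$ and its type $B$ live already in $\Gamma'$ with $x\notin\FV(t)\cup\FV(B)$; applying the induction hypothesis to the strictly smaller derivation $\Gamma'\vdash t:B$ together with $\Delta\vdash\sigma':\Gamma'$ gives $\Delta\vdash\csubst t{\sigma'}{\Gamma'}:\csubst B{\sigma'}{\Gamma'}$, and the free-variable condition identifies these with $\csubst t\sigma\Gamma$ and $\csubst B\sigma\Gamma$.

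For the type cases, the formation $\Hom Atu$ arises from premises $\Gamma\vdash t:A$ and $\Gamma\vdash u:A$; the induction hypothesis on these two strictly smaller derivations yields $\Delta\vdash\csubst t\sigma\Gamma:\csubst A\sigma\Gamma$ and $\Delta\vdash\csubst u\sigma\Gamma:\csubst A\sigma\Gamma$, and the $\Hom$-formation rule then produces exactly $\Delta\vdash\csubst{(\Hom Atu)}\sigma\Gamma$ by the definition of substitution on types. The case $\Obj$ reduces to producing $\Delta\vdash\Obj$, which needs only $\Delta\vdash$. I expect the one delicate point to be precisely this well-formedness of $\Delta$: the empty-substitution rule carries no premise guaranteeing $\Delta\vdash$, so either one strengthens that rule to require $\Delta\vdash$, or one first proves the auxiliary fact that $\Delta\vdash\sigma:\Gamma$ entails $\Delta\vdash$; with that in hand the $\Obj$ base case, and hence the whole mutual induction, closes routinely.
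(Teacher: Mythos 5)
The paper states this lemma without any proof, and your mutual induction on the derivation of the type/term judgment (with inversion on the substitution judgment) is exactly the standard argument the authors evidently intend; it is correct as written. Your closing observation is also a legitimate fine point: since the rule $\Gamma\vdash\sempty:\cempty$ has no premise, the judgment $\Delta\vdash\sigma:\Gamma$ does not by itself yield $\Delta\vdash$, so the $\Obj$ case does require either strengthening that rule or adding the well-formedness of $\Delta$ as a presupposition.
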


Given another substitution~$\tau$ such that $\Upsilon\vdash\tau:\Delta$ holds,
we write $\sigma\circ\tau$ for the \emph{composite substitution}
\[
\sigma\circ\tau\qeq \slist{t_1[\tau],\ldots,t_n[\tau]}
\]
and given a context $\Gamma=(x_i:A_i)_{1\leq i\leq n}$, the associated
\emph{identity substitution} is
\[
\id_\Gamma\qeq\slist{x_1,\ldots,x_n}
\]

\begin{lemma}
  The following rules are admissible
  \[
  \inferrule{\Upsilon\vdash\tau:\Delta\\\Delta\vdash\sigma:\Gamma}{\Upsilon\vdash\sigma\circ\tau:\Gamma}
  \qquad\qquad
  \inferrule{\Gamma\vdash}{\Gamma\vdash\id_\Gamma:\Gamma}
  \]
  Moreover, composition is associative and admits identities as neutral
  elements.
\end{lemma}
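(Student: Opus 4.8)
The plan is to establish the two admissibility rules first, then derive associativity and neutrality, everything resting on two elementary ``functoriality'' identities about substitution application that I would prove at the outset.

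First I would prove, by mutual induction on the structure of types and terms, the \emph{substitution composition law}: whenever the substitutions are composable,
\[
\csubst{(\csubst{A}{\sigma}{\Gamma})}{\tau}{\Delta} = \csubst{A}{\sigma\circ\tau}{\Gamma}
\]
and similarly $t[\sigma][\tau] = t[\sigma\circ\tau]$ for terms. On $\Obj$ both sides are $\Obj$; on $\Hom Atu$ the claim reduces to the inductive hypotheses on $A$, $t$ and $u$; and on a variable $x_i$ it holds by unfolding definitions, since $x_i[\sigma\circ\tau]=t_i[\tau]=(x_i[\sigma])[\tau]$ by the very definition of $\sigma\circ\tau=\slist{t_1[\tau],\ldots,t_n[\tau]}$. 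In the same manner I would prove the \emph{identity law} $\csubst{A}{\id_\Gamma}{\Gamma}=A$ (and $t[\id_\Gamma]=t$), the only nontrivial case being a variable, where $x_i[\id_\Gamma]=x_i$ since $\id_\Gamma=\slist{x_1,\ldots,x_n}$.

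For admissibility of composition I would induct on the derivation of $\Delta\vdash\sigma:\Gamma$. The empty case is immediate as $\sempty\circ\tau=\sempty$. In the extension case $\sigma=\sext{\sigma'}{t}$ with $\Gamma=(\Gamma',x:A)$, the premises give $\Delta\vdash\sigma':\Gamma'$, $\Gamma'\vdash A$ and $\Delta\vdash t:\csubst{A}{\sigma'}{\Gamma'}$. The induction hypothesis yields $\Upsilon\vdash\sigma'\circ\tau:\Gamma'$, while Lemma~\ref{lem:subst-typ} applied to $\Upsilon\vdash\tau:\Delta$ and the typing of $t$ gives $\Upsilon\vdash t[\tau]:\csubst{(\csubst{A}{\sigma'}{\Gamma'})}{\tau}{\Delta}$; by the composition law this type is $\csubst{A}{\sigma'\circ\tau}{\Gamma'}$, so the substitution extension rule assembles these into $\Upsilon\vdash\sigma\circ\tau:\Gamma$. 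For the identity substitution I would induct on the derivation of $\Gamma\vdash$; the extension step needs a \emph{weakening} lemma (if $\Delta\vdash\sigma:\Gamma$ and $x$ is fresh, then $\Delta,x:A\vdash\sigma:\Gamma$, proved by an easy induction using the weakening rule for terms) to lift the induction hypothesis to the larger context, after which the identity law $\csubst{A}{\id_{\Gamma'}}{\Gamma'}=A$ lets the variable rule supply the new component $x$.

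Finally, associativity and neutrality reduce entirely to the two laws above, with no further induction on derivations. Writing $\sigma=\slist{t_1,\ldots,t_n}$, both $(\sigma\circ\tau)\circ\rho$ and $\sigma\circ(\tau\circ\rho)$ have $i$-th component $t_i[\tau][\rho]=t_i[\tau\circ\rho]$ by the composition law, giving associativity componentwise. Left neutrality $\id_\Gamma\circ\sigma=\sigma$ holds since the $i$-th component is $x_i[\sigma]=t_i$, and right neutrality $\sigma\circ\id_\Delta=\sigma$ holds since the $i$-th component is $t_i[\id_\Delta]=t_i$ by the identity law. I expect the main obstacle to be the substitution composition law: it is the one genuinely recursive fact, and everything else is either a direct induction on typing derivations invoking it together with Lemma~\ref{lem:subst-typ}, or a purely computational consequence.
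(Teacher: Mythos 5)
Your proof is correct and follows exactly the standard route one would expect here (the paper itself omits the proof of this lemma): establish $(A[\sigma])[\tau]=A[\sigma\circ\tau]$ and $A[\id_\Gamma]=A$ by induction on the syntax, then prove the two admissibility rules by induction on the typing derivations using Lemma~\ref{lem:subst-typ} and a weakening lemma for substitution judgments, and read off associativity and neutrality componentwise. No gaps.
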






\subsection{The syntactic category}
We are now in position to define the category generated by this type theory.

\begin{definition}
  The \emph{syntactic category}~$\Sglob$ associated to this theory is the
  category whose
  \begin{itemize}
  \item objects are contexts~$\Gamma$ such that $\Gamma\vdash$ holds,
  \item morphisms $\sigma:\Delta\to\Gamma$ are substitutions such that
    $\Delta\vdash\sigma:\Gamma$ holds.
  \end{itemize}
\end{definition}

\noindent
The following proposition shows that, in fact, contexts can be considered as a
notation for finite globular sets.

\begin{proposition}
  \label{prop:Sglob-fin-gset}
  The category $\Sglob$ is equivalent to the category~$\FinGSet^\op$.
\end{proposition}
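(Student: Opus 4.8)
The plan is to construct an explicit equivalence of categories by exhibiting a functor $\Sglob \to \FinGSet^\op$, or equivalently a functor $F : \Sglob^\op \to \FinGSet$, and showing it is fully faithful and essentially surjective. The idea, suggested by the geometric intuition already described in the excerpt, is that a well-formed context $\Gamma = (x_i : A_i)_{1 \le i \le n}$ should be sent to the finite globular set whose cells are exactly the variables $x_i$: each variable $x_i$ of type $A_i$ becomes a cell of dimension $\dim(A_i)$, and the source and target data recorded in $A_i$ (recall $A_i$ is either $\Obj$ or of the form $\Hom{B}{t}{u}$) determine the globular maps $s$ and $t$ on this cell. First I would define $F$ on objects this way and check that the globularity equations~\eqref{eq:gset} hold; this follows because the source and target of a cell are themselves cells recorded lower in the type $A_i$, and the nesting of $\Hom{}{}{}$ types forces exactly the compatibility $s_n \circ s_{n+1} = s_n \circ t_{n+1}$, etc.

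Next I would define $F$ on morphisms. A substitution $\sigma : \Delta \to \Gamma$ in $\Sglob$ assigns to each variable of $\Gamma$ a term (variable) of $\Delta$ of the appropriate type, which is precisely the data of a map of globular sets $F\Gamma \to F\Delta$ — hence a morphism $F\sigma$ in $\FinGSet$ in the contravariant direction, as required. Functoriality (preservation of identities and composites) should follow directly from the definitions of $\id_\Gamma$ and $\sigma \circ \tau$ given in the previous subsection, together with the substitution lemmas already established.

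For essential surjectivity, given an arbitrary finite globular set $G$, I would build a context by listing its cells as variables, ordered by increasing dimension so that every cell appears after its source and target; for each cell $c$ of dimension $d$ with source $s$ and target $t$ I add $x_c : \Hom{A}{x_s}{x_t}$ where $A$ is the type already built for the boundary. One must check this context is derivable, which amounts to verifying the type-formation and context-formation rules apply at each step — this works precisely because the ordering guarantees the source and target variables are already in scope and have matching types, using the globularity of $G$. The resulting context maps under $F$ to a globular set isomorphic to $G$.

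The main obstacle I expect is full faithfulness, and within it the surjectivity of $F$ on hom-sets (fullness). Faithfulness is easy: two substitutions inducing the same globular-set map must assign the same variable to each $x_i$, hence be equal as lists. For fullness I must show that every morphism of globular sets $g : F\Gamma \to F\Delta$ arises from some substitution $\sigma : \Delta \to \Gamma$, and crucially that this $\sigma$ is \emph{typable}, i.e.\ $\Delta \vdash \sigma : \Gamma$ holds. The candidate $\sigma$ is the list of variables $g$ assigns to the cells of $\Gamma$; the subtlety is verifying the typing premises $\Delta \vdash \csubst{x_i}\sigma\Gamma : \csubst{A_i}\sigma\Gamma$ in the substitution rule. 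Here I would argue by induction on the length of $\Gamma$, using \lemr{subst-typ} and the fact that $g$ commutes with source and target maps to match the substituted type $\csubst{A_i}\sigma\Gamma$ against the type of the variable $g(x_i)$ in $\Delta$. The compatibility of $g$ with the globular structure is exactly what forces the substituted type to coincide with the declared type, closing the induction.
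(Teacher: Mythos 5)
Your proposal is correct and follows essentially the same route as the paper's proof: the same functor on objects and morphisms, faithfulness by recovering the substitution from the globular map, fullness by an induction establishing typability of the candidate substitution, and essential surjectivity via a dimension-compatible enumeration of cells. The only cosmetic difference is that the paper tags each cell with its position index $(x_i,i)$ as a formal bookkeeping device, which your freshness-based reading of contexts renders unnecessary.
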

\begin{proof}
  We construct a functor $F:\Sglob\to\FinGSet^\op$ as follows. Given a context
  $\Gamma=(x_i:A_i)_{1\leq i\leq k}$, we define $F\Gamma$ to be the globular
  set~$G^\Gamma$ with $G^\Gamma_n=\setof{(x_i,i)}{\dim(A_i)=n}$ (the second
  component ensures that two different instances of a variable in~$\Gamma$
  gives rise to two distinct cells). Given an $(n+1)$-cell $(x_i,i)$, the type of
  $x_i$ is of the form $A_i=\Hom Ayz$ and we define its source and target as
  $s_n(x_i,i)=y$ and $t_n(x_i,i)=z$. The fact that the globular identities
  \eqref{eq:gset} hold can be shown by induction on the derivation of
  $\Gamma\vdash$.


  Suppose given a morphism $\sigma:\Delta\to\Gamma$ with
  $\Gamma=(x_i:A_i)_{1\leq i\leq m}$ and $\Delta=(y_i:B_i)_{1\leq i\leq n}$. The
  substitution is of the form $\sigma=\slist{z_1,\ldots,z_m}$ and for every
  index $i$ with $1\leq i\leq m$ there is an index $j_i$ with $1\leq j_i\leq n$
  such that $z_i=y_{j_i}$. We then define
  $f^\sigma_{\dim(A_i)}(x_i,i)=(y_{j_i},j_i)$. In order to formally account for
  the case where a same variable occurs multiple times in~$\Delta$, this
  definition should in fact be performed by induction on the derivation of
  $\Delta\vdash\sigma:\Gamma$, in the expected way. By a similar induction, the
  morphism $f^\sigma$ can be shown to be a morphism of globular sets.

  The functor $F$ is faithful since a substitution $\sigma$ can be recovered
  from~$f^\sigma$: we have $\sigma=\slist{f(x_1,1),\ldots,f(x_n,n)}$. The
  functor is also full since for any morphism $f$, the substitution~$\sigma$
  defined as previously can be shown to be such that $\Delta\vdash\sigma:\Gamma$
  holds, by induction on~$\Gamma\vdash$. Finally, the functor~$F$ is essentially
  surjective: given a globular set~$G$ and an enumeration
  $G_\infty=\setof{x_i}{1\leq i\leq m}$ of all cells compatible with dimensions
  (\ie a total ordering of cells such that $i\leq j$ implies
  $\dim(x_i)\leq\dim(x_j)$), $G$ is isomorphic to the image of the context
  $(x_i:A_i)_{1\leq i\leq m}$ where $A_i=\Obj$ if $x_i$ is a $0$-cell, and
  $A_k=\Hom{A_i}{x_i}{x_j}$ if $x_k$ is an $(n+1)$-cell with $s_n(x_k)=x_i$ and
  $t_n(x_k)=x_j$.
\end{proof}


\begin{example}
  \label{ex:ctx-gset}
  The context corresponding to the globular set of \exr{gset} is
  \[
  x:\Obj,y:\Obj,z:\Obj,f:\Hom\Obj xy,g:\Hom\Obj xy,h:\Hom\Obj zy,\alpha:\Hom{\Hom\Obj xy}fg
  \]
  Other contexts also correspond to this globular set (for instance the one
  obtained by permuting $x$ and $y$), but they are isomorphic to this one.
  The substitution corresponding to the only morphism
  \[
  \xymatrix{
    x\ar[r]^f&y
  }
  \qquad\to\qquad
  \xymatrix{
    z\ar@(ur,dr)^g
  }
  \]
  is
  $
  z:\Obj,g:\Hom\Obj zz\vdash\slist{z,z,g}:(x:\Obj,y:\Obj,f:\Hom\Obj xy)
  $.
\end{example}

\subsection{Models}
\label{sec:models}
Let us briefly recall the notion of model for a dependent type theory in
categories with families~\cite{dybjer1995internal}. The category~$\Fam$ has
families of sets~$(E_x)_{x\in B}$ as objects (the set $B$ is called the \emph{base} and
the set $E_x$ the \emph{fiber} over~$x$) and a morphism
$(E_x)_{x\in B}\to (E'_x)_{x\in B'}$ consists in functions $f:B\to B'$ and
$(g_x:E_x\to E'_{f(x)})_{x\in B}$.

\newcommand{\Ty}{\operatorname{Ty}}
\newcommand{\Tm}{\operatorname{Tm}}

A \emph{category with families} (or \emph{cwf}) consists of a category~$\C$
together with a functor
\[
T\qcolon\C^\op\qto\Fam
\]
Given an object $\Gamma$ of $\C$, we write
\[
TA\qeq(\Tm^\Gamma_A)_{A\in\Ty^\Gamma}
\]
\ie $\Ty^\Gamma$ for the base of $T\Gamma$ and $\Tm^\Gamma_A$ for the
fibers. Similarly, given a morphism $\sigma:\Delta\to\Gamma$ in~$\C$, we write
$\Ty^\sigma:\Ty^\Gamma\to\Ty^\Delta$ and
$\Tm^\sigma_A:\Tm^\Gamma_A\to\Tm^\Delta_{\Ty^\sigma(A)}$ for the functions
constituting its image. The category~$\C$ should moreover satisfy the following
axioms: it should have a terminal object~$\emptyset$ and a \emph{context
  comprehension} operation which to an object $\Gamma$ of~$\C$ and an element
$A\in\Ty^\Gamma$ associates an object $(\Gamma,A)$, a morphism
$\pi:(\Gamma,A)\to\Gamma$ and an element $p\in\Tm^\Delta_{\Ty^\pi(A)}$, in a way
such that for every morphism object~$\Delta$, morphism $\sigma:\Delta\to\Gamma$
and element $t\in\Tm^\Delta_{\Ty^\sigma(A)}$ there is a unique morphism
$\slist{\sigma,t}:\Delta\to(\Gamma,A)$ such that
$\pi\circ\slist{\sigma,t}=\sigma$ and
$\Tm^{\slist{\sigma,t}}_{\Ty^\pi(A)}(p)=t$.
A \emph{morphism} between cwfs $T:\C^\op\to\Fam$ and $T':\C'^\op\to\Fam$
consists of a functor $F:\C^\op\to\C'^\op$ and a natural transformation
$\phi:T\to T'\circ F$, preserving the terminal object and context
comprehension on the nose. Given two morphisms $(F_1,\phi_1):T\to T'$ and
$(F_2,\phi_2):T\to T$, a \emph{2\nbd-morphism} $\theta:(F_1,\phi_1)\To(F_2,\phi_2)$
is a natural transformation $\theta:F_1\to F_2$ such that
$T\theta\circ\phi_1=\phi_2$.

Typically, the syntactic category $\Sglob$ is canonically a cwf when equipped
with the functor $T:\Sglob^\op\to\Fam$ such that for a context $\Gamma$, we have
$\Ty^\Gamma$ the set of types $A$ such that $\Gamma\vdash A$ and $\Tm^\Gamma_A$
the set of terms~$t$ such that $\Gamma\vdash t:A$ (thus the notations
above). The category $\Set$ is also canonically a cwf with the functor which to
a set~$X$ associates the family with $\Ty^X$ being the collection of functions
$f:Y\to X$ with~$X$ as codomain and $\Tm^X_f$ being the set of sections of~$f$.

A \emph{model} of the globular type theory is a morphism of cwfs $\Sglob\to\C$
for some cwf~$\C$. A \emph{set-theoretic model} is a model where~$\C=\Set$.


%

%



\begin{proposition}
  The category of set-theoretic models of $\Sglob$ is equivalent to the
  category~$\GSet^\op$.
\end{proposition}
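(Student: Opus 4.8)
The plan is to exhibit the equivalence directly, by showing that a set-theoretic model is completely determined, up to isomorphism, by its values on the disks $D_n$, and that these values assemble into a globular set. Throughout I use the equivalence $\Sglob\simeq\FinGSet^\op$ of Proposition~\ref{prop:Sglob-fin-gset}, under which a context $\Gamma$ corresponds to the finite globular set $G^\Gamma$, a substitution $\sigma:\Delta\to\Gamma$ to a morphism $f^\sigma:G^\Gamma\to G^\Delta$, and the empty context $\cempty$ to the empty globular set, which is terminal in $\FinGSet^\op$.

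First, from a globular set $G$ I would build a model $M_G=(F_G,\phi_G)$. Its underlying functor sends a context $\Gamma$ to the set $\operatorname{Hom}_{\GSet}(G^\Gamma,G)$ of morphisms into $G$, a substitution acting by precomposition; this is exactly the presheaf $\operatorname{Hom}_{\GSet}(-,G)$ restricted to $\FinGSet$ and transported along the equivalence. The natural transformation $\phi_G$ records, for a context $\Gamma$ and a type $\Gamma\vdash A$, the datum needed to extend a morphism $G^\Gamma\to G$ over the cell that $A$ adds; one checks that $F_G$ preserves the terminal object, since $\operatorname{Hom}_{\GSet}(\emptyset,G)$ is a singleton, and context comprehension, since $\operatorname{Hom}_{\GSet}(-,G)$ turns the pushout attaching a cell into an honest fibre product of sets. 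Conversely, from a model $M=(F,\phi)$ I would extract a globular set $G_M$ by setting $(G_M)_n=F(D_n)$ and taking the source and target maps $s_n,t_n:(G_M)_{n+1}\to(G_M)_n$ to be $F(\sigma_n^{n+1})$ and $F(\tau_n^{n+1})$; the globular identities \eqref{eq:gset} for $G_M$ follow by applying $F$ to the dual relations satisfied by the disk maps in $\Sglob$.

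The heart of the argument is to show that these two constructions are mutually inverse up to natural isomorphism, and for this the crucial lemma is that every model $M$ satisfies $F(\Gamma)\cong\operatorname{Hom}_{\GSet}(G^\Gamma,G_M)$ naturally in $\Gamma$. I would prove this by induction on the construction of $\Gamma$ by iterated context comprehension from $\cempty$. The base case is the preservation of the terminal object. For the inductive step, adding a variable $x:A$ with $A=\Hom Btu$ of dimension $n+1$ corresponds, on the globular side, to the pushout gluing a disk $D_{n+1}$ to $G^\Gamma$ along the boundary inclusion $\partial D_{n+1}\hookrightarrow D_{n+1}$, the boundary being sent to the source and target cells named by $t$ and $u$; since this pushout becomes a pullback in $\Sglob$ and $F$ preserves context comprehension, which in the cwf on $\Set$ is exactly the fibre product, the isomorphism extends from $\Gamma$ to $(\Gamma,x:A)$. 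I expect this inductive step to be the main obstacle: one must match the universal property of comprehension in $\Sglob$ and in $\Set$ precisely with the cell-attachment pushout in $\GSet$, keeping track of the type and term components of $\phi$ and verifying naturality, rather than merely checking that the underlying sets agree.

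Finally, I would read off the statement about morphisms. A $2$-morphism $M_1\To M_2$ is by definition a natural transformation of the underlying functors satisfying the compatibility with the $\phi$'s; since these functors take values in $\Set^\op$, such a natural transformation has components pointing the other way in $\Set$, so under the isomorphisms $F_i(-)\cong\operatorname{Hom}_{\GSet}(G^{(-)},G_{M_i})$ it corresponds, by the Yoneda-type computation for $\operatorname{Hom}_{\GSet}(-,G)$ on the disks, to a morphism of globular sets $G_{M_2}\to G_{M_1}$. This order reversal is precisely why the category of set-theoretic models is equivalent to $\GSet^\op$ rather than to $\GSet$; combining the object-level and morphism-level correspondences then yields the desired equivalence.
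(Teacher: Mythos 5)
The paper states this proposition without proof (the proof block is omitted from the source), so there is nothing official to compare against; judged on its own, your proposal is the expected argument and is correct in outline. The two constructions --- the nerve $\Gamma\mapsto\operatorname{Hom}_{\GSet}(G^\Gamma,G)$ in one direction, evaluation on the disk contexts in the other --- together with the representability lemma $F(\Gamma)\cong\operatorname{Hom}_{\GSet}(G^\Gamma,G_M)$, proved by induction on the length of $\Gamma$ using preservation of the terminal object and of comprehension, are exactly the right ingredients, and your account of why the answer is $\GSet^{\op}$ rather than $\GSet$ (the components of a $2$-morphism live in $\Set^{\op}$, so the assignment $M\mapsto G_M$ is contravariant on morphisms) is correct.

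Two places deserve more care than your phrasing suggests. First, in the inductive step, preservation of context comprehension, literally read, only identifies $F(\Gamma,x:A)$ with the domain of $\phi_\Gamma(A)$; to convert the cell-attachment pushout into a fibre product of sets you must additionally use that $A$ is obtained by substituting the universal type over the sphere context along its classifying substitution $\Gamma\to S^{n}$, together with naturality of $\phi$ (i.e.\ stability of types and comprehension under base change), and you need a separate sub-induction identifying $F(S^{n})$ with the set of parallel pairs of $n$-cells of $G_M$ (with the degenerate case $A=\Obj$, attachment along the empty sphere, handled via the terminal object). Second, on morphisms you should note that the compatibility condition between $\theta$ and the $\phi_i$ imposes nothing beyond naturality in this theory, since terms are variables and their interpretations as sections are forced; with that observed, the fully faithful restricted Yoneda embedding of $\GSet$ into presheaves on $\FinGSet$ finishes the morphism-level correspondence as you describe. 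Neither point is an obstruction --- you have correctly located the crux --- but they are precisely the bookkeeping a complete write-up must supply.
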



\section{Pasting schemes}
\label{sec:pasting-schemes}
The main contribution of this article is to provide a simple description of
pasting schemes, encoding a collection of composable cells.
For instance, in a $1$\nbd-category, we expect the diagram
\[
\vxym{
  x_0\ar[r]^{f_1}&x_1\ar[r]^{f_2}&x_2\ar[r]^{f_3}&x_3\ar[r]^{f_4}&x_4
}
\]
to give rise to a unique composite (it does not depend on the order in which the
morphisms are pairwise composed), but diagrams such as
\[
\xymatrix{
  x\ar@/^/[r]^f&\ar@/^/[l]^gy&z
}
\qquad\quad\text{or}\qquad\quad
\xymatrix{
  x\ar[r]^f&y&z\ar[l]_g
}
\]
are not expected to be composed. A formal description of these pasting schemes
in higher dimensions is not easy. It was achieved, in the globular setting, by
Grothendieck~\cite{grothendieck1983pursuing} using abstract categorical techniques
and studied combinatorially by Batanin~\cite{batanin1998monoidal}.

\begin{example}
  \label{ex:ps}
  The following diagram is a pasting scheme in a $2$- (or higher-) category:
  \[
  \vxym{
    x
    \ar@/^3ex/[r]^f
    \ar@{{}{ }{}}@/^1.5ex/[r]|{\phantom\alpha\Downarrow\alpha}
    \ar[r]|{f'}
    \ar@{{}{ }{}}@/_1.5ex/[r]|{\phantom\beta\Downarrow\beta}
    \ar@/_3ex/[r]_{f''}
    &
    y\ar[r]^g&z\ar[r]^h&w
  }
  \]
\end{example}

\noindent
The pasting scheme above corresponds to a globular set, which can be obtained as
the following colimit of disks, where the dotted arrows correspond to the
obvious monomorphisms of globular sets:
\[
\vxym{
  x
  \ar@/^3ex/[r]^f
  \ar@{{}{ }{}}@/^1.5ex/[r]|{\phantom\alpha\Downarrow\alpha}
  \ar[r]|{f'}="f'0"
  &
  y
  &
  x
  \ar[r]|{f'}="f'1"
  \ar@{{}{ }{}}@/_1.5ex/[r]|{\phantom\beta\Downarrow\beta}
  \ar@/_3ex/[r]_{f''}
  &
  \ar@{}|{}="y0"y
  &
  \ar@{}|{}="y1"y\ar[r]^g&z\ar@{}|{}="z0"
  &
  \ar@{}|{}="z1"z\ar[r]^h&w
  \\
  &x\ar[r]|{f'}="f'"&y&\ar@{}[r]|{\displaystyle y}="y"&&\ar@{}[r]|{\displaystyle z}="z"&
  \ar@{.>}"f'";"f'0"
  \ar@{.>}"f'";"f'1"
  \ar@{.>}"y";"y0"
  \ar@{.>}"y";"y1"
  \ar@{.>}"z";"z0"
  \ar@{.>}"z";"z1"
}
\]
that is, to the colimit of the diagram
\[
\xymatrix@C=2ex@R=3ex{
  D_2&&D_2&&D_1&&D_1\\
  &
  \ar[ul]^*-{\scriptstyle\tau^2_1}
  D_1
  \ar[ur]_*-{\scriptstyle\sigma^2_1}
  &&
  \ar[ul]^*-{\scriptstyle\tau^2_0}
  D_0
  \ar[ur]_*-{\scriptstyle\sigma^1_0}
  &&
  \ar[ul]^*-{\scriptstyle\tau^1_0}
  D_0
  \ar[ur]_*-{\scriptstyle\sigma^1_0}
}
\]
of globular sets.
The idea of Grothendieck's definition~\cite{grothendieck1983pursuing} is that
pasting schemes are precisely diagrams which can be obtained by such colimits,
which are called globular sums. Our presentation given below is largely inspired
of the work of Maltsiniotis~\cite{maltsiniotis2010grothendieck} and
Ara~\cite{ara:these}.

\subsection{Globular extensions}
\label{sec:gext}
A \emph{globular category} consists of a category~$\C$ together with a
functor~$\Glob\to\C$, \ie it is a category equipped a notion of ``disk'': we
write $D_n$ for the image of~$n$ and denote in the same way the morphisms
in~$\Glob$ and their image. A morphism~$f$ of~$\C$ is \emph{globular} when it is
the image of one in~$\Glob$. A \emph{globular sum} is the colimit of a diagram
of the form
\begin{equation}
  \label{eq:gsum}
  \vcenter{
    \xymatrix@C=1.1ex@R=3ex{
      D_{i_0}&&D_{i_1}&&D_{i_2}&\ar@{}[d]|{\displaystyle\ldots}&D_{i_{k-1}}&&D_{i_k}\\
      &
      \ar[ul]^*-{\scriptstyle\tau^{i_0}_{j_1}}
      D_{j_1}
      \ar[ur]_*-{\scriptstyle\sigma^{i_1}_{j_1}}
      &&
      \ar[ul]^*-{\scriptstyle\tau^{i_1}_{j_2}}
      D_{j_2}
      \ar[ur]_*-{\scriptstyle\sigma^{i_2}_{j_2}}
      &&&&
      \ar[ul]^*-{\scriptstyle\tau^{i_{k-1}}_{j_k}}
      D_{j_k}
      \ar[ur]_*-{\scriptstyle\sigma^{i_k}_{j_k}}
    }
  }
\end{equation}
in~$\C$, with $k\geq 0$ (the diagram cannot be empty). A globular
category~$\Glob\to\C$ is a \emph{globular extension} when all globular sums
exist.
The category~$\GExt$ of globular extensions is the subcategory of the slice
category $\Glob/\Cat$ whose objects are globular extensions and morphisms are
functors preserving globular sums.

By definition, there is a forgetful functor $\GExt\to\Glob/\Cat$, sending a
globular extension to the underlying globular category, which admits a left
(2-)adjoint. In particular, there is a free globular extension on the globular
category given by the identity functor $\Glob\to\Glob$: this category is
called~$\T_0$ (and we have a functor $\Glob\to\T_0$). Alternatively, it can be
characterized as follows:

\begin{definition}
  The globular extension $\Glob\to\T_0$ is the one such that for every globular
  extension~$\Glob\to\C$ there exists a morphism of globular extensions
  $\T_0\to\C$, which is unique up to isomorphism.
\end{definition}

\noindent
Intuitively, the category $\T_0$ is the category obtained by considering formal
disks and freely completing it under globular sums. Its objects thus correspond
to pasting schemes, but we restrict this terminology to the alternative
description of those objects as globular sets, given in next section.

\subsection{Pasting schemes}
We now recall the more usual description of pasting schemes. As for any presheaf
category (see~\cite{maclane2012sheaves}) the category of globular sets
$\GSet=\hat\Glob$ is the free cocompletion of the category~$\Glob$, meaning that
it is cocomplete and that for any functor $F:\Glob\to\C$, where~$\C$ is a
cocomplete category, there exists a unique (up to isomorphism) functor
$\hat F:\hat\Glob\to\C$ which preserves colimits and makes the diagram
\[
\vxym{
  \Glob\ar[d]_-Y\ar[r]^F&\C\\
  \hat\Glob\ar@{.>}[ur]_{\hat F}
}
\]
commute, where~$Y:\Glob\to\hat\Glob$ is the Yoneda embedding. Note that this
functor makes $\hat\Glob$ into a globular category.
Since the category~$\T_0$ is only required to have globular sums (and not all
colimits), we can expect to recover~$\T_0$ as a full subcategory of~$\hat\Glob$
consisting of globular sums of representables:

\begin{proposition}[{\cite[Proposition~2.2.1]{ara:these}}]
  The category~$\T_0$ is the full subcategory of $\GSet$ whose objects are
  globular sums of representables.
\end{proposition}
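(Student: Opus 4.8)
The plan is to show that the full subcategory $\T_0'\subseteq\GSet$ of globular sums of representables is itself a model of the free globular extension on $\Glob$, by producing the canonical comparison functor $\T_0\to\T_0'$ and proving it is an equivalence. Throughout I would rely on one combinatorial lemma, valid in any globular extension: \emph{a globular sum whose summands are themselves globular sums is again a globular sum}. Concretely, a nested globular sum is \emph{flattened} into a single diagram of the form~\eqref{eq:gsum} by concatenating the tables of the inner sums along the shared source and target disk inclusions $\sigma^m_n$ and $\tau^m_n$. This is pure bookkeeping, but it also yields the key structural fact that a globular sum admits an essentially unique such presentation, which I will use later.

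First I would check that $\T_0'$ is a globular extension and that the inclusion $\iota:\T_0'\hookrightarrow\GSet$ preserves globular sums. It contains every disk $D_n=Yn$ (the case $k=0$ of~\eqref{eq:gsum}), so it only remains to see it is closed under globular sums in $\GSet$, which is exactly the flattening lemma applied in $\GSet$. A full subcategory closed under a class of colimits computes those colimits as in the ambient category, so $\T_0'$ has all globular sums, $\iota$ preserves them, and $\T_0'$ is a globular extension whose disks are the representables $Yn$. By the defining universal property of $\T_0$, the globular extension $\Glob\to\T_0'$ then yields a morphism of globular extensions $\Phi:\T_0\to\T_0'$, unique up to isomorphism and sending the disk $D_n$ of $\T_0$ to $Yn$.

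Essential surjectivity of $\Phi$ is immediate: an object $T'$ of $\T_0'$ is a colimit $\operatorname{colim}_j Yi_j$, and the corresponding formal globular sum $S=\operatorname{colim}_j D_{i_j}$ exists in $\T_0$; since $\Phi$ preserves globular sums and $\Phi D_{i_j}=Yi_j$, we get $\Phi S\cong T'$. For full faithfulness I would use that the flattening lemma, applied now inside $\T_0$ together with the minimality of the free construction, makes every object of $\T_0$ isomorphic to a single globular sum of disks $S=\operatorname{colim}_j D_{i_j}$. As $\Phi$ preserves these colimits, the comparison $\T_0(S,T)\to\T_0'(\Phi S,\Phi T)$ fits into the chain of natural isomorphisms $\T_0(S,T)\cong\lim_j\T_0(D_{i_j},T)$ on the one side and, by the Yoneda lemma in $\GSet$, $\T_0'(\Phi S,\Phi T)\cong\lim_j(\Phi T)_{i_j}$ on the other. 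It therefore suffices to prove that for every disk $D_n$ the comparison $\T_0(D_n,T)\to(\Phi T)_n$ is a natural bijection.

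This last cell-level bijection is where I expect the main obstacle to lie: a priori the \emph{free} globular extension could either identify morphisms that $\GSet$ keeps distinct, or carry morphisms out of $D_n$ beyond the $n$-cells of $\Phi T$. I would establish it by induction on the table $(i_0,j_1,\dots,j_k,i_k)$ presenting $T$, showing that a morphism from a disk into a globular sum is represented by a globular morphism into one of the summands, matched with the way an $n$-cell of the colimit globular set $\Phi T$ is represented; the required uniqueness of these representations is precisely the normal-form statement obtained in the flattening step. Granting this bijection, $\Phi$ is fully faithful and essentially surjective, hence an equivalence, which identifies $\T_0$ with the full subcategory $\T_0'$ of globular sums of representables and proves the proposition.
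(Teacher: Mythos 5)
The paper itself gives no proof of this statement: it is imported verbatim from Ara's thesis, so there is no ``paper proof'' to match yours against, and I am judging the proposal on its own terms. Your architecture is the standard one (and the right one): show that the full subcategory $\T_0'\subseteq\GSet$ of globular sums of representables is itself a globular extension via a flattening lemma, invoke the universal property of $\T_0$ to get a comparison functor $\Phi:\T_0\to\T_0'$, dispose of essential surjectivity, and reduce full faithfulness through $\T_0(S,T)\cong\lim_j\T_0(D_{i_j},T)$ and Yoneda to the single claim that $\T_0(D_n,T)\to(\Phi T)_n$ is a bijection. All of that is correct bookkeeping.

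The gap is that this last bijection is the entire mathematical content of the proposition, and your treatment of it is not yet a proof. Two specific problems. First, with $\T_0$ defined only by its universal property (as in the paper), you have no a priori access to its hom-sets, so you cannot ``establish by induction on the table'' what $\T_0(D_n,T)$ is without first fixing an explicit construction of $\T_0$ --- and once you have such a construction the work of describing morphisms out of a disk has to be done there, not deduced from flattening. The cleaner route, which avoids this circularity, is to verify the universal property for $\T_0'$ directly: show that for every globular extension $\C$ there is an essentially unique globular-sum-preserving functor $\T_0'\to\C$, which requires proving that every morphism in $\GSet$ between globular sums of representables is uniquely assembled from globular maps $D_n\to D_{i_j}$ and the colimit injections (equivalently, that every $n$-cell of a globular sum is the image of the top cell of exactly one summand under exactly one globular map). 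Second, you lean on ``essential uniqueness of the presentation'' of a globular sum as a consequence of the flattening lemma, but flattening only constructs a presentation; uniqueness is a separate combinatorial fact (it is essentially what Theorem~\ref{thm:ps-tl} and the $\triangleleft$-linearity analysis establish later in the paper, or what the Batanin-tree encoding gives), and it needs its own argument. Until the cell-level bijection and the uniqueness of presentation are actually proved rather than announced, the proposal is a correct reduction, not a proof.
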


\noindent
The globular sets which are the objects of~$\T_0$, as described by the above
proposition, are called \emph{pasting schemes}.

\begin{example}
  \label{ex:ps-gset}
  The globular set~$G$ corresponding to \exr{ps} is the globular set with
  \[
  G_0=\set{x,y,z,w}
  \qquad
  G_1=\set{f,f',f'',g,h}
  \qquad
  G_2=\set{\alpha,\beta}
  \]
  and $G_n=\emptyset$ for $n\geq 3$, with source and targets as indicated on the
  figure.
\end{example}

Finally, we recall how the source and target of a pasting scheme can be
described, see~\cite{maltsiniotis2010grothendieck} for details.

\begin{definition}
  \label{def:ps-boundary}
  Suppose given a pasting scheme~$G$, which can be obtained as a globular sum of
  the form~\eqref{eq:gsum}.
  Given an integer $n\in\N$, its \emph{$n$-boundary} $\partial_nG$ is the
  colimit of the diagram obtained from~\eqref{eq:gsum} by replacing each object
  $D_{i_m}$ (\resp $D_{j_m}$) by $D_{\min(i_m,n)}$ (\resp
  $D_{\min(j_m,n)}$). Moreover, there are two canonical morphisms
  $\sigma_i^G,\tau_i^G:\partial_iG\to G$ exhibiting $\partial_iG$ as the source
  and target of~$G$ respectively.
\end{definition}

\begin{example}
  Consider the following pasting scheme $G$:
  \[
  \vxym{
    x\ar@/^2ex/[r]^f\ar@/_2ex/[r]_{f'}\ar@{}[r]|{\Downarrow\alpha}&y\ar[r]^g&z
  }
  \]
  Its $1$-source and $1$-target are respectively
  \[
  \xymatrix{
    x\ar@/^2ex/[r]^f&y\ar[r]^g&z
  }
  \qquad\qquad
  \xymatrix{
    x\ar@/_2ex/[r]_{f'}&y\ar[r]^g&z
  }
  \]
  (both are $\partial_1G$, and are in particular isomorphic, but the different
  namings make clear the respective inclusions $\sigma^G_1$ and~$\tau^G_1$) and
  its $0$-source and $0$-target are respectively $x$ and~$z$.
\end{example}

\subsection{A characterization}
\label{sec:tl}
We now introduce a characterization of pasting schemes, which is apparently new
and turns out to be very convenient to work with in the following. First, note
that since pasting schemes are finite colimits of disks, which are finite
globular sets, and colimits are computed pointwise, we have

\begin{lemma}
  Pasting schemes are finite globular sets, \ie $\T_0$ is a full subcategory of
  $\FinGSet$.
\end{lemma}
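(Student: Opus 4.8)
The plan is to unwind the definition of a pasting scheme as a globular sum and then exploit that colimits in a presheaf category are computed pointwise. By the preceding proposition of Ara, an object of $\T_0$ is a globular set arising as the colimit of a diagram of the form \eqref{eq:gsum}; such a diagram involves only finitely many disks, namely $D_{i_0},\ldots,D_{i_k}$ together with $D_{j_1},\ldots,D_{j_k}$, since $k\geq 0$ is a fixed natural number.

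First I would record that each disk $D_n = Yn$ is itself a finite globular set: from its explicit description, its set of $m$-cells is $\set{x_m^-,x_m^+}$ for $m<n$, a singleton for $m=n$, and empty for $m>n$. Hence $D_n$ has only finitely many cells in total and vanishes in dimensions strictly above $n$.

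Next, since $\GSet=\hat\Glob$ is a presheaf category, the colimit $G$ of \eqref{eq:gsum} is computed pointwise: for each $n\in\N$, the set $G_n$ is the colimit in $\Set$ of the diagram obtained by evaluating every disk occurring in \eqref{eq:gsum} at dimension $n$. This is a colimit of a finite diagram of finite sets, hence a finite set, so each $G_n$ is finite. Moreover, setting $N=\max_m i_m$, every disk appearing in \eqref{eq:gsum} vanishes in dimensions greater than $N$ (indeed each $j_m\leq i_m\leq N$ by the very shape of the source and target maps $\sigma^{i_m}_{j_m}$ and $\tau^{i_{m-1}}_{j_m}$), and therefore the pointwise colimit yields $G_n=\emptyset$ for $n>N$. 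By the equivalent characterization of finiteness recalled above, namely that $G$ is finite exactly when each $G_n$ is finite and $G_n=\emptyset$ for all $n$ beyond some bound, this shows that $G$ is a finite globular set; thus every object of $\T_0$ lies in $\FinGSet$.

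It remains to observe that $\T_0$ is moreover a \emph{full} subcategory of $\FinGSet$, but this is automatic: $\T_0$ is full in $\GSet$ by the preceding proposition, and $\FinGSet$ is full in $\GSet$ by definition, so for any two objects $G,G'$ of $\T_0$ the hom-sets computed in $\T_0$, in $\FinGSet$, and in $\GSet$ all coincide. There is no genuine obstacle in this argument; the only point requiring slight care is the uniform dimension bound guaranteeing the eventual vanishing of $G_n$, which is precisely what upgrades the pointwise finiteness of each $G_n$ to actual finiteness of the globular set $G$.
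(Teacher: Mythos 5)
Your proof is correct and follows exactly the route the paper takes: the paper dispatches this lemma in one sentence ("pasting schemes are finite colimits of disks, which are finite globular sets, and colimits are computed pointwise"), and your argument is simply a careful expansion of that same observation, including the uniform dimension bound and the fullness remark.
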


\noindent
We now introduce a relation $\triangleleft$ which expresses when a cell is
``before'' another in a globular set. Similar relations have already been
considered before, \eg for pasting schemes~\cite{street1991parity}.

\begin{definition}
  Given a globular set~$G$, we define the relation~$\triangleleft$ on its set
  $G_\infty$ of cells as the transitive closure of the relation such that for
  every $(n+1)$-cell~$x\in G_\infty$ one has
  \[
  s_n(x)
  \quad\triangleleft\quad
  x
  \quad\triangleleft\quad
  t_n(x)
  \]
\end{definition}

\begin{example}
  \label{ex:ps-tl}
  In the globular set of \exr{ps} (see \exr{ps-gset}), the relation is
  \[
  x\triangleleft f\triangleleft\alpha\triangleleft f'\triangleleft\beta\triangleleft f''\triangleleft y\triangleleft g\triangleleft z\triangleleft h\triangleleft w
  \]
\end{example}

\begin{example}
  \label{ex:gset-tl}
  In the globular set of \exr{gset} (which is not a pasting scheme), the
  relation is
  \[
    x\triangleleft f\triangleleft\alpha\triangleleft g\triangleleft y\triangleright z\triangleright h
  \]
  (a partial order with~$y$ as maximal element).
\end{example}

\begin{example}
  \label{ex:disk-tl}
  The relation on the globular set~$D_n$ is
  \begin{equation}
    \label{eq:disk-tl}
    x_0^-\triangleleft x_1^-\triangleleft\ldots\triangleleft x_{n-1}^-
    \triangleleft x_n\triangleleft
    x_{n-1}^+\triangleleft\ldots\triangleleft x_1^+\triangleleft x_0^+
  \end{equation}
\end{example}

\noindent
Clearly, the relation is preserved by morphisms:

\begin{lemma}
  \label{lem:tl-pres}
  For every morphism $f:G\to G'$ of globular sets and cells $x,y\in G_\infty$
  such that $x\triangleleft y$, we have $f(x)\triangleleft f(y)$.
\end{lemma}

\begin{theorem}
  \label{thm:ps-tl}
  The pasting schemes are the non-empty finite globular sets~$G$ which are
  \emph{$\triangleleft$-linear}, meaning that for every cells $x,y\in G_\infty$,
  \[
  x\triangleleft y\quad\text{or}\quad y\triangleleft x
  \quad\qquad\text{iff}\qquad\quad
  x\neq y
  \]
  This condition is equivalent to the reflexive closure of $\triangleleft$ being
  a total order on the cells of~$G$.
\end{theorem}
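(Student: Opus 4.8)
The plan is to dispose of the second equivalence first, since it is purely formal, and then to concentrate on the genuine content, namely that the pasting schemes are exactly the non-empty finite globular sets on which $\triangleleft$ is a strict total order. The displayed condition unwinds into two requirements: \emph{comparability} (any two distinct cells are $\triangleleft$-related one way or the other) and \emph{irreflexivity} ($x\triangleleft x$ for no~$x$). As $\triangleleft$ is transitive by construction, irreflexivity forces antisymmetry (from $x\triangleleft y$ and $y\triangleleft x$ one gets $x\triangleleft x$), so these two requirements hold precisely when $\triangleleft$ is a strict total order, which in turn holds precisely when its reflexive closure is a total order. Nothing more than unwinding definitions is needed here.

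For the direction ``pasting scheme $\Rightarrow$ $\triangleleft$-linear'' I would argue by induction on the length~$k$ of a globular sum \eqref{eq:gsum} presenting~$G$. The base case $k=0$ is a single disk, handled by \exr{disk-tl}, which exhibits the order on $D_n$ as an honest total order. For the inductive step I would present the sum of length $k$ as the globular set $G'$ of the first $k$ disks glued to the final disk $D_{i_k}$ along $D_{j_k}$, via $\tau^{i_{k-1}}_{j_k}$ into the last top-cell of $G'$ and $\sigma^{i_k}_{j_k}$ into $D_{i_k}$. The content is a gluing lemma describing the order on the colimit: using that the colimit inclusions are monomorphisms, every cell of the sum comes from $G'$ or from $D_{i_k}$, identified along the shared copy of $D_{j_k}$; the cells of $D_{i_k}$ outside this shared boundary form a contiguous block of the disk order of $D_{i_k}$, and one locates where they sit in the order of $G'$ by reading off, through the globular identities \eqref{eq:gset}, the source/target tower of the top cell of $G'$ to which $D_{j_k}$ is attached. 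One then checks that inserting this block yields again a strict total order, and that preservation of $\triangleleft$ under the inclusions (Lemma~\ref{lem:tl-pres}) accounts for all generating relations. The forward direction follows once this order is in hand.

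For the converse I would reconstruct a globular sum from the order. Enumerate the cells as $c_1\triangleleft\dots\triangleleft c_N$. A first observation is that two consecutive cells $c_i\triangleleft c_{i+1}$ must be related by the generating relation itself and not through a longer chain, since any intermediate witness would be a cell lying strictly between them in the total order; hence either $c_{i+1}$ is an $(n{+}1)$-cell with $c_i=s_n(c_{i+1})$, or $c_i$ is an $(n{+}1)$-cell with $c_{i+1}=t_n(c_i)$, and in either case the dimension changes by exactly~$1$. It follows that $c_1$ and $c_N$ are $0$-cells (otherwise an iterated source would produce a strictly smaller, resp.\ larger, cell), so the dimension sequence is a lattice path from $0$ to $0$ with steps $\pm1$. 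Its local maxima are the top cells and supply the dimensions $i_0,\dots,i_k$, while the unique local minimum between two consecutive maxima supplies the gluing dimension; this is precisely the shape of \eqref{eq:gsum}. I would then define the comparison map from the associated globular sum to~$G$ disk by disk and show it is an isomorphism.

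The main obstacle is exactly this last verification, and it is the same combinatorial heart that appears in the gluing lemma of the forward direction. Because the disks of a pasting scheme overlap, the cells of a single disk need \emph{not} form a contiguous block of the order: in \exr{ps-tl} the disk of $\alpha$ consists of $x,f,\alpha,f',y$, yet $y$ is separated from $f'$ by $\beta$ and $f''$. One must therefore use the globular identities \eqref{eq:gset} together with $\triangleleft$-linearity to pin down, around each local maximum, its full tower of iterated sources and targets, and to prove that two consecutive disks share precisely the sub-disk sitting over the valley between them. Organizing the argument as an induction that peels off the \emph{final} disk seems most economical, since the cells of the last disk not shared with the rest do form a terminal segment of the order: the shared valley sub-disk becomes the gluing datum and the remainder is a strictly smaller $\triangleleft$-linear globular set to which the induction hypothesis applies.
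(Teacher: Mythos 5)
Your proposal is correct and follows essentially the same route as the paper: the forward direction by induction on the number of disks, peeling off the last disk via a pushout and tracking where its new cells are inserted (the paper carries this as the explicit invariant~\eqref{eq:tl-lin-right} that the successors of the glued boundary cell are exactly its iterated targets), and the converse by reading the shape of the globular sum~\eqref{eq:gsum} off the linear order. Your observations that consecutive cells must be related by a generating relation (giving the Dyck-path dimension profile) and that the cells of a single disk need not be contiguous are exactly the details the paper compresses into ``one easily shows'' and ``a careful examination''.
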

\begin{proof}
  We first show that the pasting schemes satisfy the linearity condition, by
  recurrence on~$k\in\N$, the number of peaks in a diagram~\eqref{eq:gsum} whose
  colimit is the pasting scheme. We also show inductively that, if we call $G$
  the colimit and $\iota_k:D_{i_k}\to G$ the canonical arrow we have that the
  successors \wrt $\triangleleft$ of $\iota_k(x_{j_k})$ in~$G$ are precisely
  \begin{equation}
    \label{eq:tl-lin-right}
    \iota_k(x_{j_k})\triangleleft\iota_k(x_{j_k-1}^+)\triangleleft\ldots\triangleleft\iota_k(x_1^+)\triangleleft\iota_k(x_0^+)
  \end{equation}
  where $x_{j_k}$ denotes the top-dimensional cell of $D_{j_k}$: intuitively,
  this cell has its successors unchanged after applying~$\iota_k$, or
  equivalently its only successors are its iterated target faces.
  When the pasting scheme is a disk $D_{i_0}$, the $\triangleleft$-linearity
  condition is satisfied, see \exr{disk-tl}, and the canonical colimiting arrow
  $\iota_0:D_{i_0}\to D_{i_0}$ is the identity and thus satisfies
  \eqref{eq:tl-lin-right}. Otherwise, suppose given a
  diagram~\eqref{eq:gsum}. Since this diagram is finite, connected and simply
  connected, its colimit can be computed using iterated
  pushouts~\cite{pare1990simply}. We thus obtain a diagram of the form
  \[
  \vcenter{
    \xymatrix@C=1.1ex@R=2ex{
      &
      G
      \\
      D_{i_0}
      \ar@{.>}[ur]^*{\scriptstyle\iota_1}
      &\ar@{}[d]|{\displaystyle\ldots}&
      \ar@{.>}[ul]_*{\scriptstyle\iota_{k-1}}
      D_{i_{k-1}}&&D_{i_k}\\
      &&&
      \ar[ul]^*-{\scriptstyle\tau^{i_{k-1}}_{j_k}}
      D_{j_k}
      \ar[ur]_*-{\scriptstyle\sigma^{i_k}_{j_k}}
    }
  }
  \]
  We thus consider the colimit~$G$ of the subdiagram obtained by excluding
  $D_{j_k}$ and $D_{i_k}$, with colimiting cocone formed by the morphisms
  $\iota_i$. By induction hypothesis, the set $G$ is $\triangleleft$-linear and
  $\iota_{k-1}$ satisfies~\eqref{eq:tl-lin-right}. The globular sum we are
  interested in is the pushout of $\iota_{k-1}\circ\tau^{i_{k-1}}_{j_k}$ and
  $\sigma^{i_k}_{j_k}$. Because of the order~\eqref{eq:disk-tl} of $D_{i_k}$,
  one easily shows that the effect of the pushout is to ``insert'' the cells
  \[
  x_{j_k+1}^-\triangleleft x_{j_k+2}^-\triangleleft\ldots\triangleleft x_{i_k-1}^-
  \triangleleft x_{i_k}\triangleleft
  x_{i_k-1}^+\triangleleft x_{i_k-2}^+\triangleleft\ldots\triangleleft x_{j_k}^+
  \]
  of $D_{i_k}$ in~\eqref{eq:tl-lin-right} between $\iota_{k-1}(x_{j_k}^+)$ and
  $\iota_{k-1}(x_{j_k-1}^+)$, from which one concludes that the resulting
  globular set is $\triangleleft$-linear and \eqref{eq:tl-lin-right} is
  satisfied.

  Conversely, suppose given a $\triangleleft$-linear finite globular
  set~$G\in\hat\Glob$. We write $x_1,\ldots,x_k$ for the top-dimensional
  elements of~$G$ in the order given by $\triangleleft$, \ie
  $x_1\triangleleft\ldots\triangleleft x_k$.
  As any presheaf, $G$ can be obtained as the colimit of
  representables~\cite{maclane2012sheaves}:
  \[
  G\qeq\colim_{(n,x)\in\El G} D_n
  \]
  where $\El G$ denotes the category of elements of~$G$. A careful examination
  of this diagram shows, by recurrence on $k$, that this colimit is the same as
  the one of the globular sum of the form~\eqref{eq:gsum}, where $i_n$ is then
  dimension of the top-dimensional cell~$x_n$ and $j_{n+1}$ is the greatest
  integer such that $t^{i_n}_{j_{n+1}}(x_n)=s^{i_{n+1}}_{j_{n+1}}$ (such an
  integer necessarily exists because $x_n$ and $x_{n+1}$ are consecutive
  top-dimensional elements \wrt $\triangleleft$).
\end{proof}

\begin{example}
  From \exr{ps-tl} (\resp \ref{ex:gset-tl}), one sees that the globular set of
  \exr{ps} (\resp \ref{ex:gset}) is a pasting scheme (\resp not a pasting
  scheme).
\end{example}

\begin{example}
  To illustrate the first part of the proof, in the case of the globular sum of
  \exr{ps}, the linear orders obtained by iteratively computing the colimit
  using pushouts are
  \[
  \begin{array}{l}
    x\triangleleft f\triangleleft\alpha\triangleleft f'\triangleleft y\\
    x\triangleleft f\triangleleft\alpha\triangleleft f'\triangleleft\beta\triangleleft f''\triangleleft y\\
    x\triangleleft f\triangleleft\alpha\triangleleft f'\triangleleft\beta\triangleleft f''\triangleleft y\triangleleft g\triangleleft z\\
    x\triangleleft f\triangleleft\alpha\triangleleft f'\triangleleft\beta\triangleleft f''\triangleleft y\triangleleft g\triangleleft z\triangleleft h\triangleleft w\\
  \end{array}
  \]
\end{example}

\noindent
This result enables one to immediately draw some interesting consequences:

\begin{lemma}
  A morphism between pasting schemes is necessarily a monomorphism and the only
  automorphism of a pasting scheme is the identity.
\end{lemma}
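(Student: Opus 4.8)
The plan is to deduce both claims directly from the characterization of pasting schemes as $\triangleleft$-linear globular sets (Theorem~\ref{thm:ps-tl}), together with the fact that $\triangleleft$ is preserved by morphisms (Lemma~\ref{lem:tl-pres}). First I would record one observation extracted from the linearity condition itself: instantiating the equivalence at $x=y$ shows that $x\triangleleft x$ can never hold, \ie the relation $\triangleleft$ is \emph{irreflexive} on any pasting scheme. This tiny remark is what powers the whole argument.

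For the monomorphism claim, recall that in a presheaf category a morphism is a monomorphism precisely when each of its components $f_n\colon G_n\to G'_n$ is injective, since monos are computed pointwise. Let $f\colon G\to G'$ be a morphism of pasting schemes and suppose $f(x)=f(y)$ for two cells $x,y\in G_\infty$. If $x\neq y$, then $\triangleleft$-linearity of $G$ yields, without loss of generality, $x\triangleleft y$; applying Lemma~\ref{lem:tl-pres} gives $f(x)\triangleleft f(y)=f(x)$, contradicting the irreflexivity of $\triangleleft$ on the pasting scheme $G'$. Hence $x=y$, so $f$ is injective on cells and therefore a monomorphism.

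For the automorphism claim, let $f\colon G\to G$ be an automorphism; it is in particular a bijection on cells which, by Lemma~\ref{lem:tl-pres}, preserves $\triangleleft$. Writing the cells of $G$ as $z_1\triangleleft\cdots\triangleleft z_m$ in the total order provided by Theorem~\ref{thm:ps-tl}, the map $f$ induces a permutation $\pi$ of $\set{1,\ldots,m}$ with $f(z_i)=z_{\pi(i)}$, and preservation of the order forces $i<j$ to imply $\pi(i)<\pi(j)$. A strictly increasing permutation of a finite linearly ordered set is necessarily the identity, so $f(z_i)=z_i$ for all $i$, \ie $f=\id$.

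The proof is short because all the substantive work is already contained in Theorem~\ref{thm:ps-tl}; the only points requiring care are the pointwise characterization of monomorphisms in $\hat\Glob$ and the elementary fact that a monotone self-bijection of a finite total order is the identity. Neither presents a genuine obstacle, so I expect no difficulty beyond assembling these ingredients cleanly.
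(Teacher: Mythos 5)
Your proof is correct and follows essentially the same route as the paper's: both parts are deduced from the $\triangleleft$-linearity characterization (Theorem~\ref{thm:ps-tl}) together with preservation of $\triangleleft$ under morphisms (Lemma~\ref{lem:tl-pres}), with the contradiction $f(x)\triangleleft f(x)$ ruled out by the irreflexivity implicit in the theorem. The paper only sketches the automorphism case (``similar arguments''), and your explicit reduction to a strictly increasing self-bijection of a finite total order is exactly the intended argument.
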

\begin{proof}
  Suppose given a morphism $f:G\to G'$ between pasting schemes and consider two
  cells $x,y\in G_n$ such that $f_n(x)=f_n(y)$. If $x\neq y$ then, by
  \thmr{ps-tl}, $x\triangleleft y$ or $y\triangleleft x$ and by \lemr{tl-pres}
  on has $f_n(x)\triangleleft f_n(y)=f_n(x)$, which is excluded by
  \thmr{ps-tl}. This argument is the reason why we did not define
  $\triangleleft$ as a preorder, \ie close it under reflexivity. The other
  property uses similar arguments.
\end{proof}



\subsection{Batanin trees}
In order to make a connection with other works on the subject, we
briefly recall here another representation of pasting schemes as trees
introduced by Batanin~\cite{batanin1998monoidal}. The correspondence
with pasting schemes is detailed
in~\cite{berger2002cellular,ara:these}, we recall it here in order to
explain why the linear ordering of cells was to be expected.
In this section, we consider finite planar rooted trees:

\begin{definition}
  A \emph{tree}~$T$ consists of a family of sets $(T_n)_{n\in\N}$, whose
  elements are called $n$-\emph{vertices}, together with a family of functions
  $d_n:T_{n+1}\to T_n$ and a total order on the set~$p_n^{-1}(x)$ of
  \emph{children} of~$x$, for any vertex $x\in T_n$, such that $T_0$ is a
  singleton and $\coprod_{n\in\N}T_n$ is finite.
\end{definition}

\noindent
%
Given a vertex~$x\in T_n$, we write
\[
q_n(x)
\qeq
\set{-}\uplus p_n^{-1}(x)\uplus\set{+}
\]
for the totally ordered set $p_n^{-1}(x)$ extended with a new minimal
element~$-$ and a new maximal element~$+$. A \emph{sector} of~$x$ consists in
two consecutive elements of $q_n(x)$. To any tree $T$, one can associate a
finite globular set~$T_*$ whose $n$-cells are the sectors associated to its
$n$-vertices. The source of a sector $(y_1,y_2)$ of~$x\in T_{n+1}$ is the sector
$(x',x)$ of $f_n(x)$, where $x'$ is the preceding element of~$x$ in
$q_n(f_n(x))$; targets are defined similarly.

By suitably defining the morphisms between trees (which is slightly more involved than
one might expect), this operation extends to a functor from the
category of trees to the category of pasting schemes, which can be shown to be
an equivalence, \ie pasting schemes can be represented as trees:

\begin{proposition}[\cite{berger2002cellular,ara:these}]
  The category~$\T_0$ is equivalent to the category of trees and suitable
  morphisms.
\end{proposition}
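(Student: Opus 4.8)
The plan is to exploit the characterization furnished by \thmr{ps-tl}: since the objects of $\T_0$ are exactly the non-empty finite $\triangleleft$-linear globular sets, it suffices to show that the functor $T\mapsto T_*$ from trees to globular sets lands in $\T_0$ and is an equivalence onto it. I would establish this by the usual three points: that $T_*$ is always a pasting scheme, that the functor is essentially surjective, and that it is fully faithful. The relation $\triangleleft$ will do most of the work, and this is also where the promised explanation of ``why the linear ordering of cells was to be expected'' appears.

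First I would check that $T_*$ is a pasting scheme. That $T_*$ is a finite globular set is a direct verification from the definition of the source and target of a sector: the globular identities \eqref{eq:gset} amount to the fact that the left and right neighbours of a sector in its parent are well defined, and $T_*$ is non-empty because $T_0$ is a singleton. The real content is $\triangleleft$-linearity. Here the key observation is that the sectors of $T$, which are precisely the cells of $T_*$, are exactly the regions one meets while performing the left-to-right contour traversal of the planar tree, and that this traversal visits each sector exactly once. I would prove by induction on the tree that the order in which the contour tour enumerates sectors coincides with the reflexive closure of $\triangleleft$ on $T_*$: the defining clauses $s_n(x)\triangleleft x\triangleleft t_n(x)$ say exactly that a sector of an $(n+1)$-vertex is met between the sector immediately to its left and the one immediately to its right in the parent, which is how the tour proceeds. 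This yields a total order, and hence $T_*\in\T_0$ by \thmr{ps-tl}.

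For essential surjectivity I would build an inverse object assignment by recursion, using the coinductive view of globular sets. Given a non-empty finite $\triangleleft$-linear $G$, list its $0$-cells as $x_0\triangleleft\ldots\triangleleft x_k$; the root of the associated tree is to have exactly $k$ children, one for each gap between consecutive $0$-cells. The cells lying strictly between $x_{i-1}$ and $x_i$ in the order $\triangleleft$ form the hom-globular set $G_{x_{i-1},x_i}$ of the coinductive description, which is again non-empty (a consequence of $\triangleleft$-linearity), finite and $\triangleleft$-linear, hence corresponds by induction to a tree; grafting these $k$ trees onto the root in order produces a tree whose image under $(-)_*$ is isomorphic to $G$, because grafting is exactly the operation that reconstitutes $G$ from its gaps and the contour order of the grafted tree reassembles the global order $\triangleleft$.

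The remaining, and in my view hardest, step is the treatment of morphisms, whose subtlety the statement already flags. By the lemma above every morphism of pasting schemes is a monomorphism, and by \lemr{tl-pres} it is an injection preserving $\triangleleft$; the task is to pin down the combinatorial data on trees that corresponds exactly to such inclusions. I would define a morphism of trees to be, for each vertex of the source, a choice of sector of the target into which the sectors around that vertex are sent, subject to the compatibility conditions forced by preservation of source, target and the contour order, and then check that these conditions are precisely what makes the induced map of sectors a well-defined morphism of globular sets, as well as that the assignment is functorial. Fullness and faithfulness follow by translating morphisms back and forth through the contour enumeration, and combined with essential surjectivity this gives the equivalence. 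The crux is getting this notion of tree morphism right: the naive notion of a planar map of trees is too rigid, and one must in particular allow the inclusions that send an entire tree into a single sector of a larger one.
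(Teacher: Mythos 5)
The paper does not actually prove this proposition --- it is cited from Berger and from Ara's thesis, and the surrounding text explicitly declines to define the morphisms of trees ("slightly more involved than one might expect"). So your sketch is being measured against a blank, and on the object level it holds up well: routing the argument through \thmr{ps-tl} and identifying the reflexive closure of $\triangleleft$ on $T_*$ with the contour enumeration of sectors is exactly the informal point the paper makes with its hand-contour example, and it correctly reduces "$T_*$ is a pasting scheme" to a planarity statement. Two places in the object-level argument deserve more care than you give them. First, in the essential-surjectivity recursion you need that the cells of $G$ lying strictly between consecutive $0$-cells $x_{i-1}\triangleleft x_i$ are exactly the hom-globular set $G_{x_{i-1},x_i}$ of \defr{gset-coind}, and --- more delicately --- that this piece is $\triangleleft$-linear \emph{for its own intrinsic relation}, not merely for the restriction of the ambient order; the intrinsic relation is a priori smaller (its generating clauses omit the relations $x_{i-1}\triangleleft f\triangleleft x_i$ for $1$-cells $f$), so totality has to be re-derived, e.g.\ by showing every ambient chain between two cells of $G_{x_{i-1},x_i}$ can be replaced by one staying inside it. Second, the recursion needs a decreasing measure; cardinality works since all $0$-cells are discarded, but you should say so.

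The genuine gap is the treatment of morphisms. As you define them --- "for each vertex of the source, a choice of sector of the target \ldots subject to the compatibility conditions forced by preservation of source, target and the contour order" --- the notion of tree morphism is reverse-engineered from morphisms of globular sets, at which point fullness and faithfulness are tautologies and the proposition says nothing beyond the object bijection. (It is also not quite the right data: what must be assigned is a target sector for each \emph{sector} of the source, not one per vertex; already $D_0\to D_1$ sends the unique sector of the root to one of two sectors of the same target vertex.) The content of the cited result is that an independently defined, purely combinatorial category of trees --- Batanin's/Joyal's, with its own composition --- is equivalent to $\T_0$, and that is where the work lies. Your sketch correctly locates the difficulty but does not discharge it; as written, the morphism half of the equivalence is either circular or missing.
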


\begin{example}
  Consider the tree~$T$ with $T_0=\set{x^0_0}$, $T_1=\set{x^1_0,x^1_1,x^1_2}$,
  $T_2=\set{x^2_0,x^2_1}$, $p_1(x^2_i)=x^1_0$ and $p_0(x^1_i)=x^0_0$:
  \vspace{-6ex}
  \[
  \xymatrix@R=3ex{
    &&\\
    \ar@{}[u]|(.4)\alpha
    x^2_0
    &&
    \ar@{}[u]|(.4)\beta
    x^2_1
    &\\
    &
    \ar@{}[l]|f
    \ar@{}[u]|(.4){f'}
    \ar@{}[r]|{f''}
    x_0^1
    \ar@{-}[ul]
    \ar@{-}[ur]
    &
    x_1^1\ar@{}[u]|(.4)g
    &
    x_2^1\ar@{}[u]|(.4)h
    \\
    &&
    \ar@{}[l]|x
    \ar@{{}{ }{}}@/^/[u]|(.4)y
    \ar@{{}{ }{}}@/_/[u]|(.42)z
    \ar@{}[r]|w
    x_0^0
    \ar@{-}[ul]
    \ar@{-}[u]
    \ar@{-}[ur]
    &
  }
  \]
  We have figured the sectors in small letters. For instance, the sectors
  associated to $x^1_0$ are $f,f',f''$ and the one associated to~$x^2_0$ is
  $\alpha$. The globular set~$T_*$ is precisely the pasting scheme of
  \exr{ps}. The order on cells
  \[
  x\triangleleft f\triangleleft\alpha\triangleleft f'\triangleleft\beta\triangleleft f''\triangleleft y\triangleleft g\triangleleft z\triangleleft h\triangleleft w
  \]
  is precisely the list of sectors encountered if we draw a line around the tree
  starting from the bottom left (think of a child drawing the contour of his
  hand).
\end{example}

\noindent
This ``duality'' between trees and pasting schemes was nicely
explained by Joyal in~\cite{joyal1997disks} where he additionally
introduces a generalization of the category~$\T_0$.

\subsection{A type-theoretic definition of pasting schemes}
\label{sec:ps-tt}
We have seen in \propr{Sglob-fin-gset} that contexts correspond to
finite globular sets. Our aim is now to characterize those contexts
which correspond to pasting schemes in a type-theoretic fashion, that
is to say, using a system of inference rules. It turns out that this
characterization yields canonical forms for pasting schemes: we have
seen in \exr{ctx-gset} that multiple (isomorphic) contexts may
correspond to a same pasting scheme, but our definition singles out
exactly one. Our main tool is the linear order $\triangleleft$ studied
in \secr{tl}. The reader will notice, however, that this order does
not give rise to a well formed context. For instance, consider the
pasting scheme on the left below, whose $\triangleleft$ relation is
shown in the middle:
\[
\xymatrix{
  x\ar[r]^f&y
}
\qquad\qquad
x\triangleleft f\triangleleft y
\qquad\qquad
x:\Obj,f:\Hom\Obj xy,y:\Obj
\]
A direct translation of this order as a ``context'' is shown on the right, but
it is not a well-formed context since the variable~$y$ has to be declared before
$f$ whose type involves~$y$. We will therefore use another enumeration of the
cells and the associated context will in fact be
\[
x:\Obj,y:\Obj,f:\Hom\Obj xy
\]

Note that $y:\Obj,x:\Obj,f:\Hom\Obj xy$ would be another sensible representation
for the pasting scheme, but our typing rules will only accept the first one.

We add two new kinds of judgments to our type theory:
\begin{itemize}
\item $\Gamma\vdashps$ means $\Gamma$ is a context which is a pasting scheme,
\item $\Gamma\vdashps x:A$ means $\Gamma$ is a partial pasting scheme (one which
  is being constructed), with $x$ as ``free output''.
\end{itemize}
The rules for showing that a context is a pasting scheme are the following ones:
\[
\inferrule{\Gamma\vdashps x:\Obj}{\Gamma\vdashps}
\qquad\qquad
\inferrule{\null}{x:\Obj\vdashps x:\Obj}
\]
\[
\inferrule{\Gamma\vdashps x:A}{\Gamma,y:A,f:\Hom Axy\vdashps f:\Hom Axy}
\qquad\qquad
\inferrule{\Gamma\vdashps f:\Hom Axy}{\Gamma\vdashps y:A}
\]
where on the bottom left we suppose $y,f\not\in\FV(\Gamma)$. In the
first line, the first rule allows one to conclude that a partial
pasting scheme is in fact a pasting scheme, whereas the second one
allows one to start constructing a pasting scheme with one
$0$-cell. In the second line, the first rule allows one to attach a new
cell to a pasting scheme, and the second rule to drop the possibility
of attaching a cell to~$f$. A context~$\Gamma$ such that
$\Gamma\vdashps$ holds is called a \emph{ps-context}. Observe that
every ps-context is of odd length.

\begin{example}
  \label{ex:ps-tt}
  The context corresponding to the pasting scheme
  \[
  \vxym{
    x\ar@/^2ex/[r]^f\ar@/_2ex/[r]_{f'}\ar@{}[r]|{\Downarrow\alpha}&y\ar[r]^g&z
  }
  \]
  is derived as follows:
  \[
  \inferrule{
    \inferrule{
      \inferrule{
        \inferrule{
          \inferrule{
            \inferrule{
              \inferrule{
                \inferrule{
                  \null
                }
                {x:\Obj\vdashps x:\Obj}
              }
              {x:\Obj,y:\Obj,f:\Hom\Obj xy\vdashps f:\Hom\Obj xy}
            }
            {\scalebox{.87}{$x:\Obj,y:\Obj,f:\Hom\Obj xy,f':\Hom\Obj xy,\alpha:\Hom{\Hom\Obj xy}f{f'}\vdashps \alpha:\Hom{\Hom\Obj xy}f{f'}$}}
          }
          {\scalebox{.9}{$x:\Obj,y:\Obj,f:\Hom\Obj xy,f':\Hom\Obj xy,\alpha:\Hom{\Hom\Obj xy}f{f'}\vdashps f':\Hom\Obj xy$}}
        }
        {x:\Obj,y:\Obj,f:\Hom\Obj xy,f':\Hom\Obj xy,\alpha:\Hom{\Hom\Obj xy}f{f'}\vdashps y:\Obj}
      }
      {\scalebox{.78}{$x:\Obj,y:\Obj,f:\Hom\Obj xy,f':\Hom\Obj xy,\alpha:\Hom{\Hom\Obj xy}f{f'},z:\Obj,g:\Hom\Obj yz\vdashps g:\Hom\Obj yz$}}
    }
    {\scalebox{.83}{$x:\Obj,y:\Obj,f:\Hom\Obj xy,f':\Hom\Obj xy,\alpha:\Hom{\Hom\Obj xy}f{f'},z:\Obj,g:\Hom\Obj yz\vdashps z:\Obj$}}
  }
  {\scalebox{.9}{$x:\Obj,y:\Obj,f:\Hom\Obj xy,f':\Hom\Obj xy,\alpha:\Hom{\Hom\Obj xy}f{f'},z:\Obj,g:\Hom\Obj yz\vdashps$}}
  \]
  Graphically, it corresponds to constructing the pasting scheme in the
  following way
  \[
  \xymatrix{
    x
  }
  \rightsquigarrow
  \xymatrix{
    x\ar@/^2ex/[r]^f&y
  }
  \rightsquigarrow
  \xymatrix{
    x\ar@/^2ex/[r]^f\ar@/_2ex/[r]_{f'}\ar@{}[r]|{\Downarrow\alpha}&y
  }
  \rightsquigarrow
  \xymatrix{
    x\ar@/^2ex/[r]^f\ar@/_2ex/[r]_{f'}\ar@{}[r]|{\Downarrow\alpha}&y\ar[r]^g&z
  }
  \]
  Also, note that the variables occurring on the right precisely do so in the
  $\triangleleft$ order when read from top to bottom:
  \[
  x\triangleleft f\triangleleft\alpha\triangleleft f'\triangleleft y\triangleleft g\triangleleft z
  \]
  Finally, in \figr{ps-tt}, we have figured for each sequent
  $\Gamma\vdashps x:A$ the $\triangleleft$ relation (the height of each cell
  corresponding to its dimension), with the cell~$x$ underlined.
  \begin{figure}[ht!]
    \centering
  \begin{tabular}{|c|c|}
    \hline
    $
    \xymatrix@C=1ex@R=1ex{
      {}\phantom{\ul f}\\
      \ul x
    }
    $
    &
    $
    \xymatrix@C=1ex@R=1ex{
      &\ul f\tldr\\
      x\tlur&&y
    }
    $
    \\
    \hline
    $
    \xymatrix@C=1ex@R=1ex{
      &&\ul\alpha\tldr\\
      &f\tlur&&f'\tldr\\
      x\tlur&&&&y
    }
    $
    &
    $
    \xymatrix@C=1ex@R=1ex{
      &&\alpha\tldr\\
      &f\tlur&&\ul{f'}\tldr\\
      x\tlur&&&&y
    }
    $
    \\
    \hline
    $
    \xymatrix@C=1ex@R=1ex{
      &&\alpha\tldr\\
      &f\tlur&&f'\tldr\\
      x\tlur&&&&\ul y
    }
    $
    &
    $
    \xymatrix@C=1ex@R=1ex{
      &&\alpha\tldr\\
      &f\tlur&&f'\tldr&&\ul g\tldr\\
      x\tlur&&&&y\tlur&&z
    }
    $
    \\
    \hline
    $
    \xymatrix@C=1ex@R=1ex{
      &&\alpha\tldr\\
      &f\tlur&&f'\tldr&&g\tldr\\
      x\tlur&&&&y\tlur&&\ul z
    }
    $
    &
    $
    \xymatrix@C=1ex@R=1ex{
      &&\alpha\tldr\\
      &f\tlur&&f'\tldr&&g\tldr\\
      x\tlur&&&&y\tlur&&z
    }
    $
    \\
    \hline
  \end{tabular}    
    \caption{$\triangleleft$ relations of \exr{ps-tt}}
    \label{fig:ps-tt}
  \end{figure}
\end{example}

\begin{lemma}
  Given a context $\Gamma$ such that $\Gamma\vdashps$ holds, $\Gamma\vdash$ also
  holds.
\end{lemma}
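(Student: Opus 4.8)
The plan is to prove the statement by strengthening it so that an induction on $\vdashps$-derivations goes through. Since a judgment $\Gamma\vdashps$ is always concluded from a premise of the form $\Gamma\vdashps x:\Obj$, and the partial-pasting-scheme judgments are themselves built from the three remaining rules, I would establish the following two statements simultaneously, by mutual induction on the derivation: \textbf{(a)} if $\Gamma\vdashps$ holds then $\Gamma\vdash$ holds, and \textbf{(b)} if $\Gamma\vdashps x:A$ holds then both $\Gamma\vdash$ and $\Gamma\vdash x:A$ hold. Carrying the typing judgment $\Gamma\vdash x:A$ in the partial case is the crucial reinforcement: it is exactly the data needed to feed the context- and type-formation rules when a new cell is attached, and it is more than enough to recover $\Gamma\vdash$ itself.

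The easy cases come first. The axiom $x:\Obj\vdashps x:\Obj$ is handled by chaining the empty-context rule, the $\Obj$-formation rule, context extension and the variable rule to get $x:\Obj\vdash$ and $x:\Obj\vdash x:\Obj$. The rule concluding $\Gamma\vdashps$ from $\Gamma\vdashps x:\Obj$ is immediate from the induction hypothesis. For the rule attaching a new cell (concluding $\Gamma,y:A,f:\Hom Axy\vdashps f:\Hom Axy$ from $\Gamma\vdashps x:A$), the induction hypothesis gives $\Gamma\vdash$ and $\Gamma\vdash x:A$; from the latter the earlier structural lemma yields $\Gamma\vdash A$, whence $\Gamma,y:A\vdash$ by context extension. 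I would then obtain $\Gamma,y:A\vdash x:A$ by term weakening and $\Gamma,y:A\vdash y:A$ by the variable rule, form $\Gamma,y:A\vdash\Hom Axy$ by the $\Hom$-formation rule, extend once more to $\Gamma,y:A,f:\Hom Axy\vdash$, and conclude $\Gamma,y:A,f:\Hom Axy\vdash f:\Hom Axy$ by the variable rule.

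The one place that requires care, and the main (though purely bookkeeping) obstacle, is discharging the side conditions. To weaken $\Gamma\vdash x:A$ to $\Gamma,y:A\vdash x:A$ one must check $y\notin\FV(x)\cup\FV(A)$; this follows from the rule's hypothesis $y\notin\FV(\Gamma)$ together with the earlier lemma $\FV(A)\subseteq\FV(\Gamma)$ applied to $\Gamma\vdash A$, and the fact that $x\in\FV(\Gamma)$ since $x$ is a variable declared in $\Gamma$. Finally, for the rule concluding $\Gamma\vdashps y:A$ from $\Gamma\vdashps f:\Hom Axy$, the induction hypothesis gives $\Gamma\vdash f:\Hom Axy$, hence $\Gamma\vdash\Hom Axy$ by the structural lemma; since this type can only be formed by the $\Hom$-rule, inversion, justified by the lemma that judgments admit at most one derivation, recovers its premise $\Gamma\vdash y:A$, which is precisely what is needed. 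Everything thus reduces to the structural lemmas already established, so the argument is conceptually routine once the right induction hypothesis is fixed.
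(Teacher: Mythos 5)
The paper states this lemma without proof, and your argument is the natural one it leaves implicit: strengthening the induction hypothesis to carry $\Gamma\vdash x:A$ along the $\vdashps$-derivation is exactly the right move, and your treatment of the freshness side conditions and the inversion step for $\Gamma\vdash\Hom Axy$ is correct. No gaps.
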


\begin{lemma}
  There is at most one way to show a statement of the form $\Gamma\vdashps$ or
  $\Gamma\vdashps x:A$.
\end{lemma}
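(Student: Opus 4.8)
The plan is to prove, by induction on derivations, that the two pasting-scheme judgments satisfy an invariant rigid enough that both the last rule applied and its premise are recoverable from the conclusion alone; uniqueness then follows immediately. The invariant I would carry for a judgment $\Gamma\vdashps x:A$ has two clauses: first, that $\Gamma$ is $\triangleleft$-linear in the sense of \thmr{ps-tl}; and second --- mirroring \eqref{eq:tl-lin-right} --- that the $\triangleleft$-successors of the output $x$ in $\Gamma$ are exactly its iterated targets $t(x)\triangleleft t(t(x))\triangleleft\cdots$, so that in particular the immediate $\triangleleft$-successor of $x$ is its target, and $x$ is $\triangleleft$-maximal precisely when it is a $0$-cell. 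Note that I cannot simply invoke a globular-set characterization of pasting schemes at this point, since the correspondence between the judgment $\Gamma\vdashps$ and $\triangleleft$-linearity of the underlying globular set is not yet available; both clauses must therefore be established directly.

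Verifying the invariant is routine for the initial rule $x{:}\Obj\vdashps x{:}\Obj$ and for the drop rule: the latter passes from $\Gamma\vdashps f:\Hom Axy$ to $\Gamma\vdashps y:A$ without touching $\Gamma$, and the successor clause for $y$ is obtained from that for $f$ simply by deleting the leading entry of the chain $y\triangleleft t(y)\triangleleft\cdots$. The real work is the extension rule, which appends $y:A$ and $f:\Hom Axy$ to a context $\Gamma_0$ whose output was $x$. Since $y$ carries the same type as $x$ it is parallel to it, and the only new instances of $\triangleleft$ are $x\triangleleft f\triangleleft y$ together with $s(x)\triangleleft y\triangleleft t(x)$; using the hypothesis that the successors of $x$ in $\Gamma_0$ were exactly $t(x)\triangleleft t(t(x))\triangleleft\cdots$, this amounts to inserting the block $f\triangleleft y$ immediately after $x$. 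This is precisely the insertion step already performed in the proof of \thmr{ps-tl}, and it preserves both linearity and the successor clause for the new output $f$. I expect this case to be the most computational.

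With the invariant in hand, uniqueness becomes a rule-by-rule determinacy argument. For $\Gamma\vdashps$ the last rule must be the completion rule, whose premise is $\Gamma\vdashps x:\Obj$ with $x$ the unique $\triangleleft$-maximal cell of $\Gamma$ --- a $0$-cell, hence maximal by the successor clause, and unique by linearity --- so the premise is fixed. For $\Gamma\vdashps x:A$ I read the last rule off the shape of the conclusion: if $\Gamma$ is a singleton and $A=\Obj$ it is the initial rule; if $x$ is the last-declared variable of $\Gamma$ (and $\Gamma$ is not a singleton) it is the extension rule, whose premise is recovered by stripping the last two declarations and taking the source of $x$ as the new output; otherwise it is the drop rule. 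These cases are mutually exclusive because the output produced by the drop rule is never the last-declared variable: the conclusion there is $\Gamma\vdashps y:A$ with $y$ the target of some $f:\Hom Axy$ of $\Gamma$, and as $y\in\FV(\Hom Axy)$ the variable $y$ is declared strictly before $f$.

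The one genuinely delicate point, which I regard as the crux, is recovering the premise of the drop rule: from a conclusion $\Gamma\vdashps y:A$ one must identify the cell $f$ with $\Gamma\vdashps f:\Hom Axy$, and a priori several cells of $\Gamma$ may share the target $y$ and even the full type, as with the two parallel edges in \exr{ps-tt}. This is exactly where the invariant pays off: by the successor clause applied to the premise, such an $f$ must have $y$ as its immediate $\triangleleft$-successor, so $f$ is forced to be the immediate $\triangleleft$-predecessor of $y$, which is unique by linearity; its source is then read off from its type. Hence in every case the last rule and its premise are determined by the conclusion, and the induction yields that the whole derivation is uniquely determined, which is the claim.
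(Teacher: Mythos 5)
Your argument is correct. The paper in fact states this lemma without proof, so there is nothing to compare line by line; but the invariant you carry --- $\triangleleft$-linearity of $G^\Gamma$ together with the fact that the successors of the output are exactly its iterated targets --- is precisely the property \eqref{eq:tl-lin-right} from the proof of \thmr{ps-tl}, and is the same statement the paper later establishes by induction on derivations when proving the bijection between ps-contexts and pasting schemes. You correctly isolate the only genuinely non-trivial point: recovering the premise of the drop rule, where several cells of $\Gamma$ may share the target $y$ (and even its full type), as with the parallel arrows $f,f'$ in \exr{ps-tt}; your resolution --- that any derivable $\Gamma\vdashps f:\Hom Axy$ forces $f$ to be the immediate $\triangleleft$-predecessor of $y$, which is unique by linearity --- is exactly what is needed, and the mutual exclusivity of the three cases (singleton, output last-declared, output not last-declared) is justified by the observation that the drop rule's output is declared strictly before the cell it is dropped from. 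I see no gap.
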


\noindent
Previous example should make it clear that there is a tight correspondence
between type ps-contexts and pasting schemes as previously defined. First note
that in a ps-context, there is no variable clash:

\begin{lemma}
  In a ps-context $\Gamma=(x_i:A_i)_{1\leq i\leq n}$, for every variables $x_i$
  and $x_j$ such that $x_i=x_j$ we have $i=j$.
\end{lemma}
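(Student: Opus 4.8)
The plan is to reduce the statement to the corresponding fact about ordinary contexts. By the lemma asserting that $\Gamma\vdashps$ implies $\Gamma\vdash$, it suffices to prove the stronger claim that the declared variables of \emph{any} derivable context are pairwise distinct, i.e.\ that the map $i\mapsto x_i$ is injective. I would establish this by induction on the derivation of $\Gamma\vdash$. The empty context case is immediate. For the extension rule, which derives $\Gamma,x:A\vdash$ from $\Gamma\vdash A$ together with the side condition $x\notin\FV(\Gamma)$, the premise $\Gamma\vdash A$ entails $\Gamma\vdash$ (a routine presupposition); the induction hypothesis then gives that the variables of $\Gamma$ are pairwise distinct, while the last item of the first lemma of \secr{globular-sets} identifies $\FV(\Gamma)$ with exactly the set of these variables. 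The freshness condition $x\notin\FV(\Gamma)$ therefore forces $x$ to differ from every variable already declared in $\Gamma$, so $\Gamma,x:A$ again has pairwise distinct variables.

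Alternatively, one could argue by direct induction on the derivation of $\Gamma\vdashps$ (mutually with $\Gamma\vdashps x:A$), carrying along the invariant that the declared variables of $\Gamma$ are pairwise distinct. The two rules that leave $\Gamma$ unchanged are handled directly by the induction hypothesis, and the base rule produces the one-variable context $x:\Obj$, which is trivially fine. The only rule that enlarges the context is the one deriving $\Gamma,y:A,f:\Hom Axy\vdashps f:\Hom Axy$ from $\Gamma\vdashps x:A$; here one uses $y,f\notin\FV(\Gamma)$---again together with the identification of $\FV(\Gamma)$ with the declared variables of $\Gamma$---to separate $y$ and $f$ from the variables already present.

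The main subtlety I anticipate lies precisely in this last rule: it introduces two new variables $y$ and $f$ \emph{simultaneously}, so beyond their freshness relative to $\Gamma$ one must also check that $y\neq f$. This is guaranteed by the convention that the two metavariables name distinct fresh variables (equivalently, by the well-formedness of the resulting context, in which $f$ is declared after $y$ and so must avoid it). Routing the proof through $\Gamma\vdash$ as in the first approach sidesteps this point altogether, since the context formation rule of the globular type theory adds variables one at a time, so that the freshness side condition alone already yields pairwise distinctness. For this reason I would favour the reduction to $\Gamma\vdash$, keeping the direct induction only as a sanity check.
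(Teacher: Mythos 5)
The paper states this lemma without proof, so there is no official argument to compare yours against; your proposal is correct and supplies a perfectly reasonable one. Both of your routes work, and your preference for the reduction to $\Gamma\vdash$ is well judged: the freshness side condition of the context-extension rule, together with the fact that $\FV(\Gamma)$ contains every declared variable of $\Gamma$ by the very definition of $\FV$ on contexts (you do not even need the harder inclusion from the first lemma of \secr{globular-sets} for this direction), immediately separates each newly declared variable from all earlier ones. The one place where you lean on something the paper does not quite hand you is the presupposition ``$\Gamma\vdash A$ entails $\Gamma\vdash$'': as literally printed, the weakening rule for terms only demands $x\notin\FV(t)\cup\FV(B)$ rather than freshness of $x$ for all of $\Gamma$, so this presupposition (and indeed the paper's own first lemma) only holds under the usual implicit convention that weakening is applied to well-formed context extensions. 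Since the paper itself relies on that convention, you are entitled to it; and in any case your second, direct induction on $\Gamma\vdashps$ sidesteps the issue entirely, because the ps-rules carry the condition $y,f\notin\FV(\Gamma)$ explicitly, leaving only the point you correctly isolated, namely that $y\neq f$, which follows from the well-formedness of the extended context (or from the convention that distinct metavariables in a rule schema denote distinct fresh variables).
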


\noindent
For this reason, when associating a globular set~$G^\Gamma$ to a
ps-context~$\Gamma=(x_i:A_i)_{1\leq i\leq k}$, we can proceed in a simpler way
than in the proof of \propr{Sglob-fin-gset} and define
$G^\Gamma_n=\setof{x_i}{\dim(x_i)=n}$ (no need to rename variables), which we
will do in the following.

\begin{proposition}
  There is a bijection between pasting schemes~$G$ which are such that the
  sets~$G_n$ are disjoint subsets of the variables and ps-contexts.
\end{proposition}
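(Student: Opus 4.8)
The plan is to show that the assignment $F\colon\Gamma\mapsto G^\Gamma$ of the preceding paragraph, sending a ps-context to its globular set of graded variables, is the desired bijection, by exhibiting an explicit inverse $R$ read off from the inference rules and checking $F\circ R=\mathrm{id}$ and $R\circ F=\mathrm{id}$.

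\emph{$F$ is well-defined.} Given a ps-context $\Gamma=(x_i:A_i)_{1\le i\le k}$, the preceding lemma guarantees that the $x_i$ are pairwise distinct, and grading by dimension makes the $G^\Gamma_n$ disjoint subsets of the variables, so $G^\Gamma$ has the required shape. It remains to check that $G^\Gamma$ is a pasting scheme, which by \thmr{ps-tl} amounts to $\triangleleft$-linearity. I would prove this by induction on the derivation of $\Gamma\vdashps$, strengthening the statement to judgments $\Gamma\vdashps x:A$ with the invariant that $G^\Gamma$ is $\triangleleft$-linear and that the $\triangleleft$-successors of the distinguished cell $x$ are exactly its iterated targets. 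This is precisely the invariant \eqref{eq:tl-lin-right} used in the forward direction of \thmr{ps-tl}, and the two generating rules mirror the two operations there: the start rule produces the disk $D_0$, for which the invariant is clear; the rule attaching $y$ and $f$ inserts a new top face exactly as the pushout step in the proof of \thmr{ps-tl}, preserving linearity and re-establishing the invariant for $f$; and the rule passing from $f$ to its target only changes the distinguished cell, whose successors remain its iterated targets. The concluding rule then leaves $G^\Gamma$ unchanged and records its $\triangleleft$-linearity.

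\emph{Surjectivity.} Given a pasting scheme $G$ whose cells are disjoint subsets of variables, \thmr{ps-tl} provides the total order $\triangleleft$, and I would define $R(G)$ by reversing the process above, inducting on the cardinal of $G$. Concretely this re-expresses, at the level of the inference rules, the decomposition of $G$ as an iterated pushout of disks used in the converse direction of \thmr{ps-tl}: the rule ``attach $y,f$'' realises the insertion of a disk and the rule ``move to the target'' chooses the next attaching site. One checks that $R(G)$ is a genuine ps-context and that $G^{R(G)}=G$, giving $F\circ R=\mathrm{id}$.

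\emph{Injectivity.} Suppose $\Gamma$ and $\Gamma'$ are ps-contexts with $G^\Gamma=G^{\Gamma'}$. The underlying labelled globular set already determines each type $A_i$, read off as the chain of sources and targets terminating at $\Obj$, so $\Gamma$ and $\Gamma'$ declare the same typed variables and only their order is in question. This order is forced by the reconstruction $R$ together with the uniqueness of derivations established above: at each stage the distinguished ``free output'' cell and the globular set determine unambiguously which generating rule applies and with which cell, so $R\circ F=\mathrm{id}$, whence $F$ is injective. The point requiring the most care is the mismatch between the context order and the order $\triangleleft$: as noted before \exr{ps-tt}, the enumeration of a ps-context is not $\triangleleft$ itself but a specific dependency-respecting traversal in which the target $y$ of a newly attached cell $f$ is declared before $f$. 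The main work is therefore to verify that $R$ always yields a \emph{well-formed} context, rather than an arbitrary $\triangleleft$-compatible permutation of the cells, and that it is genuinely inverse to $F$; once the invariant \eqref{eq:tl-lin-right} is threaded through the rules as above, both the forward induction and the reverse reconstruction follow the same pattern as the proof of \thmr{ps-tl}.
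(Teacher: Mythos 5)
Your proposal is correct and follows essentially the same route as the paper: one direction by induction on the derivation with the strengthened invariant that $G^\Gamma$ is $\triangleleft$-linear and the distinguished cell's only $\triangleleft$-successors are its iterated targets, the other by reconstructing the derivation along prefixes of the $\triangleleft$-order, and finally checking the two assignments are mutually inverse. The paper's own proof is just a terser version of exactly this argument.
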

\begin{proof}
  Suppose given a context such that $\Gamma\vdashps$ holds. We can show by
  induction on its proof that for every sequent $\Gamma\vdashps x:A$, the
  globular set $G^\Gamma$ associated to it is $\triangleleft$-linear and the
  cell corresponding to~$x$ has its iterated targets as only greater elements
  \wrt $\triangleleft$.

  Conversely, suppose given a globular set~$G$ satisfying the hypothesis, with
  cells $G_\infty=\set{x_1,\ldots,x_k}$ such that
  $x_1\triangleleft\ldots\triangleleft x_k$. We can construct by recurrence on
  the length of a prefix of size $i$ (with $1\leq i\leq k$) a derivation of the
  form $\Gamma\vdashps x_i:A_i$ such that $G^\Gamma$ is the subcomplex of~$G$
  generated by $\set{x_1,\ldots,x_i}$ (\ie obtained by taking the closure under
  faces of this set).

  Finally, the two operations can be checked to be mutually inverse.
\end{proof}

\noindent
From this, we finally deduce:

\begin{theorem}
  The category~$\T_0$ is equivalent to the full subcategory~$\Sps$ of~$\Sglob$
  whose objects are pasting schemes.
\end{theorem}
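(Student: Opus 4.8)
The plan is to obtain the equivalence by restricting the equivalence $F\colon\Sglob\to\FinGSet^\op$ of \propr{Sglob-fin-gset} to the full subcategory~$\Sps$ of ps-contexts. Recall that, by the result of Ara quoted above together with the lemma that pasting schemes are finite globular sets, $\T_0$ is precisely the full subcategory of~$\FinGSet$ whose objects are the pasting schemes. It therefore suffices to show that $F$ carries $\Sps$ onto the full subcategory of $\FinGSet^\op$ spanned by the pasting schemes, and to check that the morphisms match.

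First I would invoke the formal fact that the restriction of a fully faithful functor to a full subcategory is again fully faithful, and that such a restriction is an equivalence onto its essential image. Since $F$ is an equivalence, and in particular fully faithful, the restriction $F|_{\Sps}$ is automatically fully faithful; the whole statement thus reduces to identifying its essential image, which is a condition on objects alone.

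This object-level identification is exactly the content of the preceding proposition, read through~$F$. In one direction, if $\Gamma\vdashps$ holds then that proposition shows $G^\Gamma=F\Gamma$ to be $\triangleleft$-linear, hence a pasting scheme by \thmr{ps-tl}; so $F|_{\Sps}$ does land among the pasting schemes. In the other direction, given an arbitrary pasting scheme~$G$, I would use the $\triangleleft$-linearity of \thmr{ps-tl} to enumerate its cells as $x_1\triangleleft\cdots\triangleleft x_k$, rename them by distinct variables so that the sets~$G_n$ become disjoint subsets of the variables, and then apply the preceding proposition to obtain a ps-context~$\Gamma$ with $G^\Gamma\cong G$. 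This gives essential surjectivity of $F|_{\Sps}$ onto the pasting schemes, and together with full faithfulness yields the equivalence.

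The step demanding the most care is the bookkeeping of variance. The functor~$F$ is contravariant on the underlying globular sets: a substitution $\sigma\colon\Delta\to\Gamma$ is sent to the globular morphism $G^\Gamma\to G^\Delta$, as already illustrated in \exr{ctx-gset}. One must therefore track the direction of morphisms carefully when transporting the equivalence $\Sglob\to\FinGSet^\op$ along the inclusion of $\Sps$ and matching its image with $\T_0$, since the substitutions of $\Sps$ correspond to globular morphisms of $\T_0$ read in the opposite direction. Once this correspondence is fixed, functoriality, fullness and faithfulness are inherited verbatim from \propr{Sglob-fin-gset}, and the only genuinely new ingredient is the object bijection recalled above, so no further computation is required.
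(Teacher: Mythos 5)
Your proof is correct and follows exactly the route the paper intends (the theorem is stated there without proof, as an immediate consequence of restricting the equivalence $\Sglob\simeq\FinGSet^\op$ of \propr{Sglob-fin-gset} to ps-contexts and combining the object-level bijection of the preceding proposition with Ara's identification of $\T_0$ as the full subcategory of $\FinGSet$ spanned by pasting schemes). Your remark on variance is apt: the restriction literally produces an equivalence $\Sps\simeq\T_0^\op$, and the missing $\op$ in the theorem's statement is a looseness of the paper rather than a gap in your argument.
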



\begin{remark}
  Note that, in fact, we have a tighter correspondence than an equivalence of
  categories since pasting schemes up to isomorphism are in bijection with
  ps-contexts up to $\alpha$\nbd-equivalence (as opposed to isomorphism), \ie
  renaming of variables. Moreover, one can construct a variable-free
  presentation of the sequent calculus (using De Bruijn indices) which entirely
  removes the need for $\alpha$-equivalence.
\end{remark}

\subsection{Boundaries}
We now explain how to compute boundaries (see \defr{ps-boundary}) of
ps-contexts. This is defined as a ``meta-operation'' on contexts.

\begin{definition}
  Given $i\in\N$, we define the \emph{$i$\nbd-source} $\csrc[i](\Gamma)$ of a
  context $\Gamma$ as $\csrc[i](x:\Obj)=x:\Obj$ and
  $\csrc[i](\Gamma,y:A,f:\Hom Axy)$ as
  \[
  \begin{cases}
    \csrc[i](\Gamma)&\text{if $\dim(A)\geq i$}\\
    \csrc[i](\Gamma),y:A,f:\Hom Axy&\text{otherwise}
  \end{cases}
  \]
  and the \emph{$i$-target} by $\ctgt[i](x:\Obj)=x:\Obj$, and
  $\ctgt[i](\Gamma,y:A,f:\Hom Axy)$ as
  \[
  \begin{cases}
    \ctgt[i](\Gamma)&\text{if $\dim(A)>i$}\\
    \textnormal{drop}(\ctgt[i](\Gamma)),y:A&\text{if $\dim(A)=i$}\\
    \ctgt[i](\Gamma)&\text{otherwise}
  \end{cases}
  \]
  where $\textnormal{drop}(\Gamma)$ is $\Gamma$ with the last element
  removed. By convention, we write
  \[
  \csrc(\Gamma)=\csrc[\dim(\Gamma)-1](\Gamma)
  \qquad\qquad
  \ctgt(\Gamma)=\ctgt[\dim(\Gamma)-1](\Gamma)
  \]
\end{definition}

\begin{proposition}
  Given a ps-context $\Gamma$ and $i\in\N$, the contexts $\partial_i^-(\Gamma)$
  and $\partial_i^+(\Gamma)$ are ps-contexts. Moreover, they correspond to
  \defr{ps-boundary}, in the sense that
  \[
  G^{\partial_i^-(\Gamma)}
  \quad\cong\quad
  \partial_i(G^\Gamma)
  \quad\cong\quad
  G^{\partial_i^+(\Gamma)}
  \]
  and the canonical inclusions $G^{\partial_i^-(\Gamma)}\to G^\Gamma$ and
  $G^{\partial_i^+(\Gamma)}\to G^\Gamma$ are respectively $\sigma^G_i$ and
  $\tau^G_i$.
\end{proposition}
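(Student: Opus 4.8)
The plan is to prove the three assertions together — that $\csrc[i](\Gamma)$ and $\ctgt[i](\Gamma)$ are ps-contexts, that $G^{\csrc[i](\Gamma)}\cong\partial_i(G^\Gamma)\cong G^{\ctgt[i](\Gamma)}$, and that the induced inclusions are $\sigma^G_i$ and $\tau^G_i$ — by induction on the derivation of $\Gamma\vdashps$, carrying along the partial judgments $\Gamma\vdashps x:A$ so that the attach and drop rules can serve as the inductive steps. The backbone is the bijection between ps-contexts and pasting schemes from the previous proposition together with the $\triangleleft$-linearity of \thmr{ps-tl}: they let me read off from $\Gamma$ the globular sum~\eqref{eq:gsum} presenting $G^\Gamma$, whose peaks $D_{i_m}$ are the top-dimensional cells listed in $\triangleleft$-order and whose valleys $D_{j_m}$ are their shared iterated boundaries. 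The target to hit is the truncated globular sum of \defr{ps-boundary}, obtained by replacing each $D_{i_m}$ and $D_{j_m}$ by $D_{\min(i_m,i)}$ and $D_{\min(j_m,i)}$.

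The induction amounts to checking that the syntactic case split of the definition reproduces this truncation. The base case $x:\Obj\vdashps x:\Obj$ is immediate, both operations fixing $x:\Obj$ and matching $\partial_i(D_0)=D_0$. For the attach step, passing from $\Gamma\vdashps x:A$ to $\Gamma,y:A,f:\Hom Axy\vdashps f:\Hom Axy$ glues a fresh peak $D_{\dim(A)+1}$ along a valley $D_{\dim(A)}$. When $\dim(A)<i$ this disk survives truncation unchanged, and both $\csrc[i]$ and $\ctgt[i]$ keep the pair $(y,f)$, performing the corresponding attach on the truncated context; when $\dim(A)>i$ the disk collapses under truncation onto the part of the sum already present, and both operations discard the pair. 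The intermediate case $\dim(A)=i$ is where the operations diverge: $\csrc[i]$ still discards the pair, retaining the source $i$-face recorded earlier, while $\ctgt[i]$ reattaches $y$ after a $\textnormal{drop}$, and I verify that this exactly exchanges the source $i$-face for the target $i$-face in the truncated sum. The drop rule changes only the distinguished output and leaves the context — hence both truncations — untouched, so it is handled by updating the recorded output in the induction hypothesis.

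From this matching the three claims follow. Since each truncated context is produced step by step using the very attach and drop rules, it is derivably a ps-context, which gives the first assertion directly and avoids any appeal to the globular side for well-formedness; the same matching then exhibits an isomorphism of the associated globular set with the globular sum that \defr{ps-boundary} calls $\partial_i(G^\Gamma)$. For the inclusions, the kept (or reattached) variables of $\csrc[i](\Gamma)$ (resp. $\ctgt[i](\Gamma)$) form a subfamily of the cells of $G^\Gamma$, inducing a morphism of globular sets; because truncation selects, at each truncated peak, the $i$-dimensional source face $x_i^-$ (resp. the target face $x_i^+$) of $D_{i_m}$ through $\sigma^{i_m}_i$ (resp. $\tau^{i_m}_i$), this morphism realizes the canonical cocone, and by uniqueness of the colimiting comparison it coincides with $\sigma^G_i$ (resp. $\tau^G_i$).

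The delicate point is precisely the target operation at $\dim(A)=i$: the clause $\textnormal{drop}(\ctgt[i](\Gamma)),y:A$ must discard the source $i$-face that the recursion has already recorded (the current output, of dimension $i$) and install the target $i$-face $y$ in its place, and I must check that after the subsequent steps this yields exactly the $\tau^{i_m}_i$-face rather than merely an isomorphic copy, so that the induced map is literally $\tau^G_i$. The tool for this is the description, obtained in the proof of \thmr{ps-tl}, of the successors of a top cell $\iota_m(x_{j_m})$ as its iterated target faces, which pins down how the linear order behaves under truncation and guarantees that the single element removed by $\textnormal{drop}$ is exactly the one that must be replaced. Getting this bookkeeping right — together with tracking the distinguished output through the drop rule so that every attach in the truncated derivation is well-formed — is the main obstacle; the remaining verifications against \defr{ps-boundary} are routine.
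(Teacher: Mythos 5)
The paper states this proposition without giving a proof, so there is no official argument to compare against; judged on its own, your plan is essentially correct and you have isolated the genuine crux. Two points need more care than your sketch provides. First, your inductive step reads each attach as gluing a disk $D_{\dim(A)+1}$ along $D_{\dim(A)}$, but \defr{ps-boundary} defines $\partial_i(G^\Gamma)$ as the colimit of the truncated diagram of shape~\eqref{eq:gsum}, whose peaks are the \emph{top-dimensional} cells of $G^\Gamma$ (each accounting for a whole maximal ascending run of attaches) and whose left legs are $\tau$'s; your finer per-attach pushout decomposition is a different diagram, so you must either show the two truncations have the same colimit (for instance by collapsing the degenerate pushouts $D_i\sqcup_{D_i}D_i\cong D_i$ created by truncation) or first re-express $\partial_i$ in iterated-pushout form before matching it against the syntactic recursion. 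Second, the invariant governing the $\dim(A)=i$ clause should be stated explicitly and proved for \emph{both} rules: if $\Gamma\vdashps x:A$ with $\dim(A)\geq i$, then the last declaration of $\ctgt[i](\Gamma)$ is $t^{\dim(A)}_i(x)$. The attach case uses that the source and target of an $(i{+}1)$-cell are parallel $i$-cells, and the drop case is exactly where a version phrased only in terms of the context (rather than the current output) would break, since the context is unchanged while the output moves to a target face. With this invariant, $\textnormal{drop}$ removes precisely the source-side $i$-cell $x$, and since the appended $y$ has the same type $A$ as $x$, the surgery re-forms a legal attach pair, so the well-formedness of $\ctgt[i](\Gamma)$ as a ps-context does follow syntactically as you intend; the identification of the induced inclusions with $\sigma^G_i$ and $\tau^G_i$ via the universal property of the colimit is then routine.
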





\subsection{A non-canonical definition}
To further emphasize the fact that the previous characterization of
ps-contexts is canonical in the sense that each pasting scheme has a
unique derivation, we provide here a second type theoretical
characterization of pasting schemes which lacks this property.

We now consider judgments of the form
\[
\Gamma\vdashps
\qquad\text{and}\qquad
\Gamma\vdashps\Delta
\]
The main modification \wrt the previous definition is that $\Delta$ is a
context, \ie we have a choice of multiple ``free outputs'', whereas we only had
one before. The rules are
\[
\inferrule{\Gamma\vdashps\Delta}{\Gamma\vdashps}
\qquad\qquad
\inferrule{\null}{x:\Obj\vdashps x:\Obj}
\]
\[
\inferrule{\Gamma\vdashps\Delta,x:A,\Delta'}{\Gamma,y:A,f:\Hom Axy\vdashps\Delta,\Delta',y:A,f:A}
\]
where, in the last rule, $y$ and $f$ are fresh in~$\Gamma$.

\begin{proposition}
  For every context $\Gamma$ such that $\Gamma\vdashps$ is derivable $G^\Gamma$
  is a pasting scheme and conversely every pasting scheme is isomorphic to one
  of this form.
\end{proposition}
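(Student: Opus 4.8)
The plan is to establish both directions of the correspondence by relating this non-canonical system to the canonical one from \secr{ps-tt}, whose correctness we have already established. The key observation is that the two systems derive exactly the same ps-contexts $\Gamma$ (the judgments $\Gamma\vdashps$ on the left), differing only in how they track the ``free outputs'' on the right: the canonical system carries a single distinguished cell $x$, whereas this system carries an entire context $\Delta$ of candidate cells. So rather than reprove the geometric statement from scratch, I would show that $\Gamma\vdashps$ is derivable in the non-canonical system if and only if it is derivable in the canonical one, and then invoke the already-proven correspondence between canonical ps-contexts and pasting schemes.

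First I would show that every $\Gamma$ with $\Gamma\vdashps$ in the non-canonical system yields a pasting scheme. I would proceed by induction on the derivation, proving the stronger invariant that whenever $\Gamma\vdashps\Delta$ is derivable, the context $\Delta$ consists precisely of those cells of $G^\Gamma$ that are maximal with respect to $\triangleleft$ in the sense of having no proper successors other than their iterated targets — that is, $\Delta$ enumerates the cells to which one may still attach. The base case $x:\Obj\vdashps x:\Obj$ is immediate. For the attachment rule, adding $y:A$ and $f:\Hom Axy$ on top of a cell $x$ appearing in $\Delta$ inserts the new cells into the $\triangleleft$ order just as in the proof of \thmr{ps-tl}, and one checks that $\triangleleft$-linearity is preserved and the free-output context is updated correctly (removing $x$, since it is no longer attachable once $f$ has been built on it, and adding the new attachment points $y$ and $f$). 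Concluding $\Gamma\vdashps$ from $\Gamma\vdashps\Delta$ then gives a $\triangleleft$-linear finite globular set, which is a pasting scheme by \thmr{ps-tl}.

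For the converse, given a pasting scheme $G$, I would use the already-established canonical derivation: by the proposition of \secr{ps-tt}, $G$ is isomorphic to $G^\Gamma$ for some $\Gamma$ with $\Gamma\vdashps$ derivable in the canonical system. It then suffices to translate each canonical derivation step into a non-canonical one, which is possible because the canonical attachment rule is the special case of the non-canonical rule in which the chosen attachment point $x$ happens to be the last free output, and the canonical ``drop'' rule (passing from $f:\Hom Axy$ to $y:A$) corresponds to merely shifting attention within $\Delta$ rather than modifying $\Gamma$. Thus every canonical derivation of $\Gamma\vdashps$ induces a non-canonical one, so every pasting scheme arises as $G^\Gamma$ for some non-canonically derivable $\Gamma$.

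The main obstacle will be pinning down the precise invariant relating the free-output context $\Delta$ to the $\triangleleft$ structure of $G^\Gamma$, and verifying that the single attachment rule — which permits inserting at \emph{any} position $\Delta, x:A, \Delta'$ rather than only at the end — nonetheless keeps $G^\Gamma$ a pasting scheme. This is exactly the source of non-canonicity: different interleavings of attachments at different free outputs can produce the same underlying globular set via distinct derivations, so I must show that each admissible attachment still inserts cells into a single consistent linear order. The bookkeeping of which cells remain attachable after each step, and confirming this matches the geometric notion of having only iterated targets as successors, is where the care is needed; once that invariant is correctly formulated, both directions follow by routine induction.
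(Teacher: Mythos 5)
Your overall strategy is sound and, for the forward direction, genuinely different from the paper's: the paper disposes of the whole proposition in one line by observing that any derivation of $\Gamma\vdashps$ in the non-canonical system can be permuted (up to bookkeeping) into a derivation in the canonical system of \secr{ps-tt}, and then appeals to the correspondence already established there; you instead prove the forward direction by a direct induction on the non-canonical derivation, maintaining an invariant on the free-output context $\Delta$ and invoking \thmr{ps-tl}. Your converse direction (embedding canonical derivations into non-canonical ones, with the drop rule becoming a no-op) matches what the paper leaves implicit. The direct induction is a perfectly reasonable, if longer, route.

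The problem is that the invariant you propose for that induction is false. You claim that $\Delta$ consists of the cells of $G^\Gamma$ ``having no proper successors other than their iterated targets'' --- but that is the invariant for the \emph{single} free output of the canonical system, and it does not hold for all elements of the non-canonical $\Delta$. Take $\Gamma=x:\Obj,y:\Obj,f:\Hom\Obj xy,z:\Obj,g:\Hom\Obj yz$, derivable with $\Gamma\vdashps f:\Hom\Obj xy,\ z:\Obj,\ g:\Hom\Obj yz$. Here $f\in\Delta$ (one may still attach a $2$-cell to $f$, as in the paper's own example of non-canonicity), yet the $\triangleleft$-successors of $f$ are $y,g,z$, of which $g$ and $z$ are not iterated targets of $f$. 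So the induction as stated breaks at the very step you identify as delicate. The correct invariant is that $\Delta$ consists exactly of the cells of $G^\Gamma$ that are not the source of any cell of $G^\Gamma$ --- equivalently, in the linear order, the cells whose immediate $\triangleleft$-successor (if any) is their target. With that invariant the attachment step does work: attaching $(y,f)$ at such a cell $x$ inserts $f\triangleleft y$ immediately after $x$ (just before $t(x)$, or at the end if $\dim x=0$), preserving $\triangleleft$-linearity, and the non-source set is updated by deleting $x$ and adding $y,f$, exactly as the rule prescribes. You also need the corresponding fact for the converse, namely that the canonical free output is never a source in its context and hence always lies in $\Delta$; this is easy but should be said. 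So: right architecture, wrong invariant --- fix the invariant and both inductions go through.
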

\begin{proof}
  In a derivation of $\Gamma\vdashps$, we can permute rules so that it
  corresponds (up to bookkeeping) to a derivation in the sense of \secr{ps-tt}.
\end{proof}

\noindent
The following example shows that there is however not a canonical ps-context
associated to a pasting scheme with this variant.

\begin{example}
  The pasting scheme
  \[
  \vxym{
    x\ar@/^2ex/[r]^f\ar@/_2ex/[r]_{f'}\ar@{}[r]|{\Downarrow\alpha}&y\ar[r]^g&z
  }
  \]
  corresponds to both the contexts
  \[
  x:\Obj,y:\Obj,f:\Hom{}xy,f':\Hom{}xy,\alpha:\Hom{}f{f'},z:\Obj,g:\Hom{}yz
  \]
  and
  \[
  x:\Obj,y:\Obj,f:\Hom{}xy,z:\Obj,g:\Hom{}yz,f':\Hom{}xy,\alpha:\Hom{}f{f'}
  \]
  for which we can derive that they are pasting schemes.
\end{example}

From a practical point of view, the main advantage of previous axiomatization of
ps-contexts over this one is that it can be used to simply check whether a
context is a pasting scheme or not, without having to provide a proof or run an
complicated proof-search algorithm.



\section{Weak $\omega$-categories}
\label{sec:cat}
In this section, we finally use our characterization of pasting schemes to give
a type theoretic definition of weak $\omega$\nbd-categories.

\subsection{A type-theoretic definition of $\omega$-groupoids}
The basic idea in order to define an $\omega$-category is that every pasting
scheme should have a composition. We thus introduce a new family of terms to our
syntax, called \emph{coherences}, and denoted
\[
\coh\Gamma A[\sigma]
\]
Each coherence is indexed by a context~$\Gamma$, a type~$A$ and a
substitution~$\sigma$. Such a term should be thought of as a constant
$\coh\Gamma A$ which takes a pasting scheme $\Gamma$ and produces a value of
type~$A$, corresponding to its composition. The substitution~$\sigma$
corresponds to formally applying a substitution to it. By convention, we write
$\coh\Gamma A$ instead of $\coh\Gamma A[\sempty]$ in the following, and we
extend the rules for substitution (see \secr{substitution}) by
\[
(\coh\Gamma A[\sigma])[\tau/\Gamma]
\qeq
\coh\Gamma A[\tau\circ\sigma]
\]
The free variables of a coherence are defined as
\[
\FV(\coh\Gamma A[\sigma])
\qeq
(\FV(A)\setminus\FV(\Gamma))\cup\FV(\sigma)
\]
Indeed, the coherence binds the variables of $\Gamma$ in~$A$. In
practice, we will always have $\FV(A)\subseteq\FV(\Gamma)$ and thus
$\FV(A)\setminus\FV(\Gamma)=\emptyset$.
Note that, contrary to the situation in earlier sections, the addition
of coherences means that terms are no longer necessarily variables,
\ie the distinction between the two syntactic classes is relevant.

Since, in an $\omega$-category, one expects to have composites of
all pasting schemes, a naive definition would simply assert their
existence.  For example, let us consider the consequences of the
following rule:

\begin{equation}
  \label{eq:gpd-rule}
  \inferrule{\Gamma\vdashps\\\Gamma\vdash A}{\Gamma\vdash\coh\Gamma A:A}
\end{equation}

\begin{remark}
  The rule should in fact be written
  \[
  \inferrule{\Gamma\vdashps\\\Gamma\vdash A\\\Delta\vdash\sigma:\Gamma}{\Delta\vdash\coh\Gamma A[\sigma]:\csubst A\sigma\Gamma}
  \]
  so that \lemr{subst-typ} still holds. Alternatively, one can add the rule of
  the lemma to the type theory. We will ignore this detail in what follows.
\end{remark}

\begin{example}
  \label{ex:coh}
  The following coherences are all derivable:
  \begin{itemize}
  \item Every object has an associated identity:
    \[
    x:\Obj\vdash\ccoh:\Hom\Obj xx
    \]
  \item Every pair of composable morphisms have a composition:
    \[
    x:\Obj,y:\Obj,f:\Hom\Obj xy,z:\Obj,g:\Hom\Obj yz\vdash\ccoh:\Hom\Obj xz
    \]
    \newcommand{\ccomp}{\mathsf{comp}}
    \newcommand{\cid}{\mathsf{id}}
  \item There is a morphism witnessing that identities are neutral elements on
    the left (the \emph{left-unitor}):
    \[
    x:\Obj,y:\Obj,f:\Hom\Obj xy\vdash\ccoh:\Hom{\Hom\Obj xy}{\ccomp(\cid_x,f)}f
    \]
    where $\cid_x$ is a notation for the first derived coherence,
    and $\ccomp(\cid_x,f)$ is a notation for the composite of $\cid_x$ and $f$,
    defined using the second coherence. This morphism has a (weak)
    inverse:
    \[
    x:\Obj,y:\Obj,f:\Hom\Obj xy\vdash\ccoh:\Hom{\Hom\Obj xy}f{\ccomp(\cid_x,f)}
    \]
    (this is an inverse only up to weakly invertible morphisms).
  \item Every triple of composable morphisms have a composition:
    \[
    \scalebox{.8}{$x:\Obj,y:\Obj,f:\Hom\Obj xy,z:\Obj,g:\Hom\Obj yz,w:\Obj,h:\Hom\Obj zw\vdash\ccoh:\Hom\Obj xw$}\\
    \]
  \item In fact, this system admits ``partial composition'' operations, ignoring
    some variables in the context.  For
    instance, in a context as above, we can compose only~$f$ and~$g$ (and forget
    about~$h$):
    \[
    \scalebox{.8}{$x:\Obj,y:\Obj,f:\Hom\Obj xy,z:\Obj,g:\Hom\Obj yz,w:\Obj,h:\Hom\Obj zw\vdash\ccoh:\Hom\Obj xz$}\\
    \]
  \item\label{coh-inv} Every morphism admits a weak inverse:
    \[
    x:\Obj,y:\Obj,f:\Hom\Obj xy\vdash\ccoh:\Hom\Obj yx
    \]
  \end{itemize}
\end{example}

\noindent
The last coherence should make it clear that the unrestricted composition
rule above yields not a definition of an $\omega$-category, but of an
\emph{$\omega$-groupoid}. In fact, this definition is very close to Brunerie's
definition~\cite[Appendix~A]{brunerie:these}: this is no accident since our work
is largely inspired by his. The only difference between the two definitions of
$\omega$-groupoid is that Brunerie uses a more liberal notion of pasting scheme,
which is called a \emph{contractible context}, generated by the following
rules:
\[
\inferrule{\null}{x:\Obj\vdashcontr}
\]
\[
\inferrule{\Gamma\vdashcontr\\\Gamma\vdash x:A}{\Gamma,y:A,f:\Hom Axy\vdashcontr}
\qquad\qquad
\inferrule{\Gamma\vdashcontr\\\Gamma\vdash x:A}{\Gamma,y:A,f:\Hom Ayx\vdashcontr}
\]
where $y,f\not\in\FV(\Gamma)$ for the second one. As an illustration of the
difference, the context
\[
x:\Obj,y:\Obj,f:\Hom\Obj xy,y':\Obj,f':\Hom Ax{y'}
\]
is contractible but not a pasting scheme. We will see in next section that using
our pasting schemes allows us to formulate a definition for $\omega$-categories.

\subsection{Type-theoretic definition of $\omega$-categories}
\label{sec:catt}
In order to characterize $\omega$-categories, we will need to restrict
the rule of the previous section in such a way that inverses are
excluded, but such that all reasonable structural operations remain.
In fact, the rule \eqref{eq:gpd-rule} will be replaced by two separate
rules.  Indeed, the ``problem'' with derivation of inverses (last
point of \exr{coh}) is that it exchanges source and target, so we add
a side condition ensuring that this does not happen:
\begin{equation}
  \label{eq:catt-prod}
  \inferrule
  {\scalebox{.9}{$\Gamma\vdashps$}\\\scalebox{.9}{$\Gamma\vdash\Hom Atu$}\\\scalebox{.9}{$\csrc(\Gamma)\vdash t:A$}\\\scalebox{.9}{$\ctgt(\Gamma)\vdash u:A$}}
  {\Gamma\vdash\coh\Gamma{\Hom Atu}:\Hom Atu}
\end{equation}
whenever
\[
\FV(t)=\FV(\csrc(\Gamma))
\qquad\text{and}\qquad
\FV(u)=\FV(\ctgt(\Gamma))
\]
This rule allows one to derive all the ``operations'' required in an
$\omega$-category (\eg composition and identities), but not their coherences
(for instance, the witnesses for identity being a neutral element shown in
previous section are not derivable). We therefore add another rule to compensate:
\begin{equation}
  \label{eq:catt-inv}
  \inferrule
  {\Gamma\vdashps\\\Gamma\vdash A}
  {\Gamma\vdash\coh\Gamma A:A}
\end{equation}
\vspace{-\baselineskip}%
whenever
\[
\FV(A)=\FV(\Gamma)
\]
The side condition forbids ``partial compositions'', which would allow for too
many invertible operations otherwise (see previous section).
%
%
With these two rules, we can derive the operations required to be present in an
$\omega$-category and only those. For instance, all the coherences of \exr{coh}
can be derived excepting the last one. Some more practical illustrations are
given in \secr{implementation}.

Writing $\Scat$ for the syntactic category (of contexts and substitutions)
associated to the preceding type theory, with rules \eqref{eq:catt-prod} and
\eqref{eq:catt-inv} for introducing coherences, we propose the following
definition.

\begin{definition}
  \label{def:tt-cat}
  An \emph{$\omega$-category} is a set-theoretic model of~$\Scat$.
\end{definition}

\subsection{The Grothendieck-Maltsiniotis definition}
In order to motivate our definition on the theoretical side, we briefly recall
the Grothendieck-Maltsiniotis definition of $\omega$-categories in next section
and conjecture that it coincides with our definition,
see~\cite{maltsiniotis2010grothendieck} and \cite{ara:these} for details.

Fix a globular extension~$\Glob\to\C$. A morphism $f$ in~$\C$ is \emph{algebraic} when
for every decomposition $f=g\circ f'$ with~$g$ globular, $g$~is an
identity. From a proof-theoretic perspective, this means that $f$ cannot be
obtained by non-trivially weakening another morphism~$f'$, \ie it ``uses'' all
the cells in its source (requirements below that some morphisms should be
algebraic will give rise to the side conditions of our rules). A pair of
morphisms
\[
f,g
\qcolon
D_i\qto X
\]
is \emph{parallel} when $i>0$ implies
\[
f\circ\sigma_{i-1}=g\circ\sigma_{i-1}
\qquad\text{and}\qquad
f\circ\tau_{i-1}=g\circ\tau_{i-1}
\]
A \emph{lifting} for such a pair is a morphism $h:D_{i+1}\to X$ such that
$f=h\circ\sigma_i$ and $g=h\circ\tau_i$:
\[
\xymatrix{
  D_{i+1}\ar@{.>}[dr]^h\\
  D_i\ar@<.5ex>[u]^{\sigma_i}\ar@<-.5ex>[u]_{\tau_i}\ar@<.5ex>[r]^f\ar@<-.5ex>[r]_g&X
}
\]
A pair of parallel morphisms as above is \emph{admissible} when either
\begin{enumerate}
\item there exist decompositions
  \[
  f=\sigma_{i-1}\circ f'
  \qquad\text{and}\qquad
  g=\tau_{i-1}\circ g'
  \]
  with $f'$ and $g'$ algebraic, or
\item $f$ and~$g$ are algebraic.
\end{enumerate}

The \emph{canonical coherator for $\omega$-categories}, written~$\T$, is the
colimit of the diagram of globular extensions
\[
\vxym{
  \Glob\ar[r]&\T_0\ar[r]&\T_1\ar[r]&\T_2\ar[r]&\cdots
}
\]
where~$\T_{i+1}$ is the globular extension obtained from~$\T_i$ by formally adding
a lifting for every admissible pair of arrows, and taking the globular extension
freely generated by the resulting category (see \secr{gext}).

\begin{definition}[\cite{maltsiniotis2010grothendieck}]
  \label{def:cat}
  An \emph{$\omega$-category}~$C$ is a functor $C:\T^\op\to\Set$ such that
  $C^\op$ preserves globular sums.
\end{definition}

\noindent
The two rules \eqref{eq:catt-prod} and
\eqref{eq:catt-inv} for introducing coherences in \secr{catt}
correspond precisely to formally adding lifting for admissible
morphisms, with each rule corresponding to one of the conditions for
being admissible. The details, however, are rather involved and left
for future work.

\begin{conjecture}
  The category $\Scat$ is equivalent to $\T$.
\end{conjecture}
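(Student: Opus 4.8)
The plan is to realize both $\T$ and $\Scat$ as colimits of a common inductive construction and to match the two towers stage by stage. On the syntactic side, I would stratify $\Scat$ by coherence depth: let $\Scat_n$ be the subcategory whose coherences $\coh\Gamma A[\sigma]$ only mention, inside $A$ and $\sigma$, coherences of strictly smaller depth, so that $\Scat=\colim_n\Scat_n$ and $\Scat_0$ contains no coherences at all. By the theorem identifying $\T_0$ with $\Sps$, together with $\Sglob\simeq\FinGSet^\op$, the base case $\Scat_0\simeq\T_0$ (restricted to pasting schemes) is already available. Before the induction can even be stated I must reconcile the two kinds of structure: $\T$ is a globular extension whose objects are pasting schemes, whereas $\Scat$ is a category with families whose objects include arbitrary contexts. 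The first step is therefore to exhibit $\T$ as the globular part of a cwf, using that every context is built from disks by context extension and that a term of dimension $n$ in a ps-context $\Gamma$ is exactly a globular-set map $D_n\to G^\Gamma$, hence a morphism of $\T$; dually, substitutions between ps-contexts are the morphisms of $\T$. Since adding liftings adds no new objects, the objects of $\T$ remain pasting schemes throughout, matching the ps-contexts of $\Scat$, and the remaining contexts of $\Scat$ are recovered as the freely generated comprehension structure (equivalently, both theories have the same category of models, a model being determined by its globular-sum-preserving behaviour).

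The heart of the argument is the inductive step: assuming $\Scat_n\simeq\T_n$, show that passing to $\Scat_{n+1}$ matches passing to $\T_{n+1}$. Concretely, I would prove a bijection between admissible parallel pairs $f,g\colon D_i\to X$ in $\T_n$ and the coherence types $\Hom Atu$ derivable at depth $n+1$ over the ps-context $\Gamma$ with $G^\Gamma\cong X$, sending the lifting $h\colon D_{i+1}\to X$ to the coherence term $\coh\Gamma{\Hom Atu}$. Under this bijection the parallel condition on $(f,g)$ corresponds to $\Hom Atu$ being a well-formed type (matching iterated sources and targets, via the globular identities \eqref{eq:gset}); the two admissibility alternatives correspond to the two rules, case~(1) --- $f=\sigma_{i-1}\circ f'$ and $g=\tau_{i-1}\circ g'$ with $f',g'$ algebraic --- matching rule \eqref{eq:catt-prod} together with $\csrc(\Gamma)\vdash t:A$ and $\ctgt(\Gamma)\vdash u:A$, and case~(2) --- $f,g$ algebraic --- matching rule \eqref{eq:catt-inv}. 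The decisive technical point is to translate ``$f'$ is algebraic'' (it does not factor through a nontrivial globular map, i.e.\ it uses every cell of its source) into the free-variable side conditions $\FV(t)=\FV(\csrc(\Gamma))$, $\FV(u)=\FV(\ctgt(\Gamma))$ and $\FV(A)=\FV(\Gamma)$; this uses the boundary computation $G^{\partial_i^\pm(\Gamma)}\cong\partial_i(G^\Gamma)$ with the inclusions $\sigma^G_i,\tau^G_i$, which identifies the algebraic factorizations with the source and target ps-subcontexts exactly.

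Having matched one step, I would check that the equivalences $\Scat_n\simeq\T_n$ are compatible with the inclusions $\Scat_n\hookrightarrow\Scat_{n+1}$ and $\T_n\to\T_{n+1}$ and with globular sums, and that the substitution action $(\coh\Gamma A[\sigma])[\tau]=\coh\Gamma A[\tau\circ\sigma]$ corresponds to precomposition of liftings with globular-sum morphisms, so that the comparison functors respect all the cwf and globular-extension operations. Passing to the colimit then yields $\Scat\simeq\T$ in the reconciled sense. The main obstacle is this inductive step, and within it the algebraic-versus-free-variable dictionary: one must show the correspondence is an honest bijection, with no admissible pair left without a coherence and no derivable coherence arising from a non-admissible pair, and one must handle the bookkeeping introduced by ``freely generating the globular extension'' at each stage of $\T$ (ensuring it adds nothing beyond what the coherence substitution rule already produces), as well as the reconciliation of $\T$'s pasting-scheme objects with the general contexts of $\Scat$. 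These are precisely the rather involved details the authors defer, which is why the statement is posed as a conjecture rather than a theorem.
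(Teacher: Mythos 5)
The statement you are proving is stated in the paper as a \emph{conjecture}: the authors explicitly write that ``the details, however, are rather involved and left for future work,'' and they supply no proof. So there is nothing in the paper to compare your argument against except the informal remarks surrounding the conjecture --- and your outline coincides with those remarks almost exactly: the stratification of $\Scat$ by coherence depth matching the tower $\T_0\to\T_1\to\cdots$, and the correspondence between the two admissibility cases and the two rules \eqref{eq:catt-prod} and \eqref{eq:catt-inv}. In that sense your plan is the ``right'' one, but it is a plan, not a proof: every step you identify as ``the decisive technical point'' or ``the main obstacle'' is precisely a step you do not carry out, and these are the same steps the authors defer.

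Two of the gaps deserve to be named concretely, because they are not mere bookkeeping. First, the object mismatch: $\T$ has only pasting schemes as objects, while $\Scat$ has all contexts, so the conjecture cannot be an equivalence on the nose without either restricting $\Scat$ to ps-contexts or showing that general contexts contribute nothing up to equivalence; your appeal to ``the freely generated comprehension structure'' and to ``the same category of models'' is an assertion, not an argument, and your claim that a term of dimension $n$ over a ps-context is ``exactly a globular-set map $D_n\to G^\Gamma$'' is only true at coherence depth $0$ --- at higher depth a term is a morphism of $\T_n$, which is what makes the induction nontrivial. Second, the bijection between admissible parallel pairs in $\T_n$ and derivable coherence judgments at depth $n+1$ requires both directions: that every derivable judgment yields an admissible pair (the algebraicity-to-free-variable dictionary, which is plausible via the boundary proposition) \emph{and} that every admissible pair in $\T_n$ --- including those involving morphisms created by the ``freely generate the globular extension'' step --- is hit by some derivation. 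Neither direction is established, and the second is where the real difficulty lies. Your proposal is therefore a faithful restatement of the authors' intended strategy rather than a proof of the conjecture.
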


\newcommand{\cd}{\operatorname{cd}}

\noindent
More precisely, we define the \emph{coherence depth}~$\cd(t)$ of a term~$t$ as
the number of nested coherences, \ie $\cd(x)=0$,
$\cd(\coh\Gamma A\sigma)=\max(\cd(A)+1,\cd(\sigma))$, etc. Given $n\in\N$, we
conjecture that the subcategory of $\Scat$, with the same objects, morphisms
being substitutions with coherence depth less than~$n$, is equivalent to
$\T_n$. Finally, we conjecture that the situation \wrt set-theoretic models
described \secr{models} generalizes as follows.

\begin{conjecture}
  Type-theoretic $\omega$-categories (\defr{tt-cat}) correspond precisely to
  Grothendieck-Maltsiniotis $\omega$\nbd-cate\-go\-ries (\defr{cat}).
\end{conjecture}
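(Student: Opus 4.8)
The plan is to deduce this correspondence from the preceding conjecture $\Scat\simeq\T$ (together with its graded refinement relating substitutions of bounded coherence depth to the filtration $\T_0\to\T_1\to\cdots$), by transporting the model-theoretic machinery of \secr{models}. The guiding observation is that the proposition of \secr{models}, identifying set-theoretic models of $\Sglob$ with $\GSet^\op$, is exactly the base instance of the statement we want: a presheaf on $\T_0$ that is continuous on globular sums is precisely a globular set, and, by \defr{cat}, a presheaf on $\T$ that is continuous on globular sums is precisely a Grothendieck--Maltsiniotis $\omega$-category. Granting $\Scat\simeq\T$, it therefore suffices to establish the general principle that set-theoretic models of $\Scat$ --- \ie morphisms of cwfs $\Scat\to\Set$ --- coincide with those functors $\Scat^\op\to\Set$ that are continuous on globular sums; the equivalence of globular extensions $\Scat\simeq\T$ then transports this class of functors onto \defr{cat}, and the conclusion follows.

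To prove that general principle I would exploit the fact that $\Scat$ is \emph{democratic}: by \propr{Sglob-fin-gset} every context is built from disk contexts by iterated comprehension, and comprehension along a type $\Hom Atu$ glues a new disk onto $\Gamma$ along its source and target boundaries (\defr{ps-boundary}). A morphism of cwfs preserves the terminal object and comprehension on the nose, so it necessarily carries each globular sum to the corresponding limit of sets; conversely, a globular-sum-continuous functor determines, through its values on the disks, the fibres $\Tm^\Gamma_A$ together with the comprehension structure, and these reassemble into a cwf morphism. The presence of coherences is handled by the graded induction: interpreting a coherence $\coh\Gamma A[\sigma]$ of coherence depth $n$ is \emph{forced}, on a globular-sum-continuous functor, by the universal lifting property that defines the passage from $\T_{n-1}$ to $\T_n$, so such a functor carries a unique compatible interpretation of every coherence. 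Granting the graded equivalence, the object-level bijection between the two classes of functors then follows level by level, and functoriality on morphisms (the $2$-morphisms of cwfs on one side, natural transformations on the other) is a routine verification.

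The hard part, already flagged by the authors as ``rather involved,'' lives in the input conjecture $\Scat\simeq\T$: one must check that the side conditions of rules \eqref{eq:catt-prod} and \eqref{eq:catt-inv} --- the algebraicity constraints $\FV(t)=\FV(\csrc(\Gamma))$, $\FV(u)=\FV(\ctgt(\Gamma))$ for the first and $\FV(A)=\FV(\Gamma)$ for the second --- match, under the equivalence, exactly the two admissibility conditions for liftings, and that this matching is stable under substitution. Even granting $\Scat\simeq\T$, the delicate point in the transfer above is the claim that the interpretation of each coherence is uniquely \emph{forced} rather than chosen, which is where the filtration by coherence depth, the universal property of admissible liftings, and the combinatorics of source and target boundaries must all be reconciled. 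The remaining variance bookkeeping --- in particular the $\op$ appearing in \secr{models} --- I expect to be purely formal, so that once the matching of side conditions to admissibility is settled, the equivalence of the two categories of models should drop out by diagram chasing.
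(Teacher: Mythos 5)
The statement you are proving is explicitly a \emph{conjecture}: the paper offers no proof of it, states that the correspondence between the typing rules and the admissibility conditions is ``rather involved and left for future work,'' and so there is nothing in the paper to compare your argument against. Judged on its own terms, your proposal is a reasonable high-level plan --- reduce to the first conjecture $\Scat\simeq\T$, then show that cwf morphisms $\Scat\to\Set$ coincide with globular-sum-continuous presheaves --- but it is doubly conditional: the first conjecture is itself open (and is where you yourself locate the hard work), and even the base case of your transfer principle, the proposition in \secr{models} identifying set-theoretic models of $\Sglob$ with $\GSet^\op$, is stated in the paper without proof. Your claim that comprehension-preservation forces preservation of globular sums also needs an actual argument: one must check that every globular sum in $\Scat$ decomposes into the context-extension pullbacks that a cwf morphism is required to preserve, which is essentially the content of that unproven base proposition.

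More seriously, the step handling coherences is wrong as stated. You claim that on a globular-sum-continuous functor the interpretation of each coherence of depth $n$ is ``uniquely forced'' by the passage from $\T_{n-1}$ to $\T_n$. But that passage \emph{formally adds} a lifting for each admissible pair and then freely completes under globular sums; the lifting $h$ is a new generating arrow constrained only by $f=h\circ\sigma_i$ and $g=h\circ\tau_i$, not characterized by any universal property that would determine its image under a functor. If the interpretations were forced, being an $\omega$-category would be a mere property of the underlying globular set, and any two $\omega$-categories with the same underlying globular set would coincide --- which defeats the entire construction. What is actually needed is the opposite observation: since the coherence terms are already arrows of $\Scat$ (conjecturally of $\T$), \emph{both} a cwf morphism and a continuous presheaf carry their interpretations as part of their data, and the correspondence must be shown to match these data bijectively, not to derive them from the restriction to $\T_0$. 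Repairing this changes the shape of the argument from ``uniqueness by forcing'' to a genuine comparison of structures, and that comparison --- together with the matching of the side conditions $\FV(t)=\FV(\csrc(\Gamma))$, $\FV(u)=\FV(\ctgt(\Gamma))$, $\FV(A)=\FV(\Gamma)$ against algebraicity, which you correctly flag --- is the open content of the conjecture.
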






\section{Implementation(s)}
\label{sec:implementation}
Since there are two authors for this paper, there are also two implementations
of a type-checker for the theory.
The first\footnote{\url{https://github.com/ericfinster/catt}} is
done in Haskell, following precisely the inference rules described in this
article, while the other\footnote{\url{https://github.com/smimram/catt}} is in
OCaml and has some more experimental features (notably, the presence of implicit
arguments making proofs much shorter but lacking theoretical
justification for the moment).
The second implementation may be tried
online\footnote{\url{https://smimram.github.io/catt}}.

In fact, our definition naturally lends itself to standard techniques
for the implementation of a type checking algorithm for a dependent
type theory, although in view of sparseness of the theory, these
techniques appear in a rather simplified form. The Haskell
implementation, for example, uses a simple bi-directional typechecking
setup together with normalization by evaluation. Furthermore, while the
theory lacks any notion of abstraction, the coherences are nonetheless
assigned the type of a dependent product internally, allowing for
substitution to propagate in the types. Indeed one sees immediately by
inspection of the rules that each coherence can naturally be seen as a
formal constant in the dependent product obtained by abstracting
over all of the variables in the context (necessarily a pasting
scheme) which defines it.

In our system, the user writes statements of the form
\[
x_1:A_1,\ldots,x_n:A_n\vdash \ccoh:A
\]
and the typechecker automatically ensures that the judgment is derivable, or issues
an error if it is not the case. In practice, coherences are written
\[
\texttt{coh $name$ ($x_1$ : $A_1$) $\ldots$ ($x_n$ : $A_n$) : $A$ ;}
\]
where $name$ allows the user to give a name to a coherence. Note that the order in
which arguments are given is important, as it is used to determine whether
the corresponding context is a ps-context or not. The arrow type $\Hom Axy$ is
noted
\[
\texttt{$A$ | $x$ -> $y$}
\]
For instance, we can define identities on $0$-cells:
\begin{verbatim}
coh id (x : *) : * | x -> x ;
\end{verbatim}
composition of $1$-cells:
\begin{verbatim}
coh comp (x : *) (y : *) (f : * | x -> y)
         (z : *) (g : * | y -> z)
         : * | x -> z ;
\end{verbatim}
left unitor:
\begin{verbatim}
coh unitl (x : *) (y : *) (f : * | x -> y)
          : * | x -> y
              | comp x x (id x) y f -> f ;
\end{verbatim}
the ``inverse'' for left unitor:
\begin{verbatim}
coh unitl' (x : *) (y : *) (f : * | x -> y)
           : * | x -> y
               | f -> comp x x (id x) y f ;
\end{verbatim}
%
%
%
%
associativity of composition of $1$-cells:
\begin{verbatim}
coh assoc
    (x : *) (y : *) (f : * | x -> y) (z : *)
    (g : * | y -> z) (w : *) (h : * | z -> w)
    : * | x -> w
        | comp x z (comp x y f z g) w h ->
          comp x y f w (comp y z g w h) ;
\end{verbatim}
vertical composition of $2$-cells:
\begin{verbatim}
coh vcomp
    (x : *) (y : *) (f : * | x -> y)
    (g : * | x -> y) (a : * | x -> y | f -> g)
    (h : * | x -> y) (b : * | x -> y | g -> h)
    : * | x -> y | f -> h ;
\end{verbatim}
horizontal composition of $2$-cells:
\begin{verbatim}
coh hcomp
  (x : *) (y : *) (f : * | x -> y)
  (g : * | x -> y) (a : * | x -> y | f -> g)
  (z : *) (h : * | y -> z) (k : * | y -> z)
  (b : * | y -> z | h -> k)
  : * | x -> z
      | comp x y f z h -> comp x y g z k ;
\end{verbatim}
the exchange law:
\begin{verbatim}
coh ichg
  (x : *) (y : *) (f : * | x -> y)
  (g : * | x -> y) (a : * | x -> y | f -> g)
  (h : * | x -> y) (b : * | x -> y | g -> h)
  (z : *) (l : * | y -> z) (m : * | y -> z)
  (c : * | y -> z | l -> m) (n : * | y -> z)
  (d : * | y -> z | m -> n)
  : * | x -> z
  | comp x y f z l -> comp x y h z n
  | hcomp x y f h (vcomp x y f g a h b) z l n
    (vcomp y z l m c n d) ->
    vcomp x z (comp x y f z l) (comp x y g z m)
    (hcomp x y f g a z l m c) (comp x y h z n)
    (hcomp x y g h b z m n d) ;
\end{verbatim}
Finally, and as expected, defining an ``inverse'' for an arbitrary $1$-cell fails: the input
\begin{verbatim}
coh inv (x : *) (y : *) (f : * | x -> y)
        : * | y -> x ;
\end{verbatim}
produces the following output:
\begin{verbatim}
Checking coherence: inv
Valid tree context
Src/Tgt check forced
Source context: (x : *)
Target context: (y : *)
Failure: Source is not algebraic for y : *
\end{verbatim}
meaning that the side conditions of the rule \eqref{eq:catt-prod} are not
fulfilled.


\section{Conclusion and future work}
\label{sec:concl}
We have presented a type theory designed to capture a well-known
definition of $\omega$-category, extending work on a similar
definition for $\omega$-groupoids.  Most importantly, we have examined
the relationship between pasting schemes represented as well-formed
contexts, and their semantic counterparts (Batanin trees and globular sums).
We conjecture that the models of this theory coincide with the
definition of Maltsiniotis, but a detailed comparison will have to
await further work.

We note also that the combinatorics of pasting schemes, as described by their
$\triangleleft$-relation, seems promising. It quickly reminds one of
the theory of Dyck words and we expect that interesting results can be
obtained by applying similar methods.

Brunerie's definition of $\omega$-groupoids, upon which this work
builds, was of course motivated by the view of types advocated in
homotopy type theory. Since the introduction of homotopy type theory~\cite{hottbook},
many authors have wondered about the possibility of weakening the
equality relation in order to obtain a theory in which types behave as
\emph{categories} or \emph{directed} homotopy types.  We feel that the
theory presented in this paper serves as a small step in this direction,
isolating the core system of coherences which one would like to have.
In future work, we aim to see if other type theoretic constructions
($\Sigma$ and $\Pi$ types, for example) may be reasonably added to the
theory, thus increasing its expressive power.






\bibliographystyle{IEEEtranS}
\bibliography{papers}
\end{document}